\newtheorem{theorem}{Theorem}
\newtheorem{lemma}[theorem]{Lemma}
\newtheorem{remark}[theorem]{Remark}
\theoremstyle{definition}
\newtheorem{definition}[theorem]{Definition}
\begin{document}                                                 
\title[Multiphase flow system with phase transition]{Energetic variational approaches for multiphase flow systems with phase transition}                                 
\author[Hajime Koba]{Hajime Koba}                                
\address{Graduate School of Engineering Science, Osaka University\\
1-3 Machikaneyamacho, Toyonaka, Osaka, 560-8531, Japan}                                  
%\curraddr{...}                                   
\email{iti@sigmath.es.osaka-u.ac.jp}

%\date{}                                      
\thanks{This work was partly supported by the Japan Society for the Promotion of Science (JSPS) KAKENHI Grant Number JP21K03326.}                                     
%\translator{...}
\keywords{Mathematical modeling, Multiphase flow, Phase transition, Surface tension, Interface}                            
\subjclass[]{70-10,76-10,76T10,35A15}

\begin{abstract}
We study the governing equations for the motion of the fluid particles near air-water interface from an energetic point of view. Since evaporation and condensation phenomena occur at the interface, we have to consider phase transition. This paper applies an energetic variational approach to derive multiphase flow systems with phase transition, where a multiphase flow means compressible and incompressible two-phase flow. We also research the conservation and energy laws of our system. The key ideas of deriving our systems are to acknowledge the existence of the interface and to apply an energetic variational approach. More precisely, we assume that both the coefficient of surface tension and the density of the interface are constants, and we apply an energetic variational approach to look for the dominant equations for the densities of our multiphase flow systems with phase transition. As applications, we can derive the usual Euler and Navier-Stokes systems, or a two-phase flow system with surface tension by our methods.
\end{abstract}
\maketitle

\section{Introduction}\label{sect1}

\begin{figure}[htbp]
\input{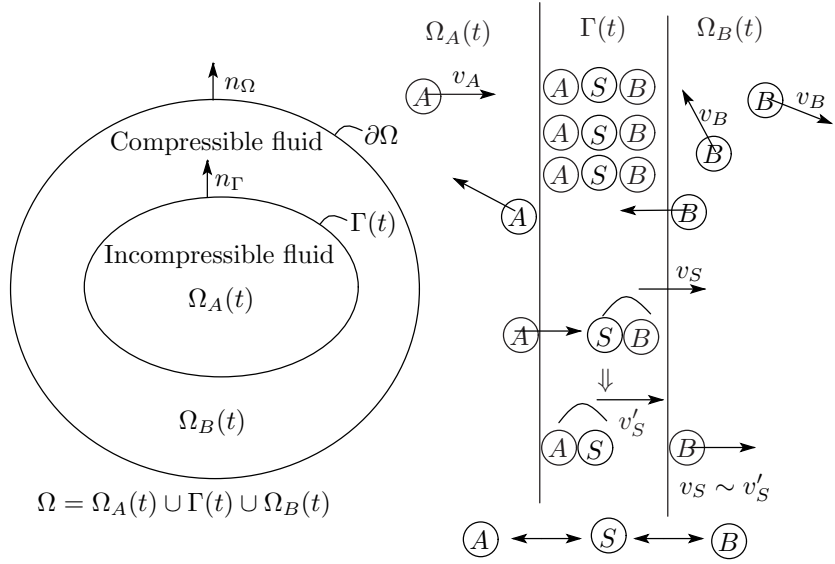}
\caption{Multiphase Flow and Phase Transition}
\label{Fig1}
\end{figure}

We are interested in the motion of the fluid particles near the boundary between the atmosphere and the ocean. We call the boundary the \emph{air-water interface}. Since evaporation and condensation phenomena occur at the interface, we have to study multiphase flow with phase transition in order to understand air-sea interaction. This paper considers the governing equations for the motion of the fluid particles in two moving domains and the interface from an energetic point of view. We employ an energetic variational approach to derive our multiphase flow systems with phase transition. Of course, this paper proposes our system as one of the models for phase transition.

Let us first introduce fundamental notations. Let $t \geq 0$ be the time variable, and $x( = { }^t (x_1 , x_2, x_3 ) ) \in \mathbb{R}^3$ the spatial variable. Fix $T >0$. Let $\Omega \subset \mathbb{R}^3$ be a bounded domain with a smooth boundary $\partial \Omega$. The symbol $n_\Omega = n_\Omega (x) = { }^t (n^\Omega_1 , n^\Omega_2 , n^\Omega_3 )$ denotes the unit outer normal vector at $x \in \partial \Omega$. Let $\Omega_A (t) (= \{ \Omega_A (t) \}_{0 \leq t < T} )$ be a bounded domain in $\mathbb{R}^3$ with a moving boundary $\Gamma (t)$. Assume that $\Gamma (t) (= \{ \Gamma (t) \}_{0 \leq t < T})$ is a smoothly evolving surface and is a closed Riemannian 2-dimensional manifold. The symbol $n_\Gamma = n_\Gamma ( x , t ) = { }^t (n^\Gamma_1 , n^\Gamma_2 , n^\Gamma_3)$ denotes the unit outer normal vector at $x \in \Gamma (t)$. For each $t \in [0,T)$, assume that $\Omega_A (t) \Subset \Omega$. Set $\Omega_B (t) = \Omega \setminus \overline{\Omega_A (t)}$. It is clear that $\Omega = \Omega_A (t) \cup \Gamma (t) \cup \Omega_B (t)$ (see Figure \ref{Fig1}). Set
\begin{multline}\label{eq11}
\Omega_{A,T} = \bigcup_{0< t < T} \{ \Omega_A (t) \times \{ t \} \},{ \ } \Omega_{B,T} = \bigcup_{0< t < T} \{ \Omega_B (t) \times \{ t \} \},\\
\Gamma_T = \bigcup_{0< t < T} \{ \Gamma (t) \times \{ t \} \},{ \ }\Omega_T = \Omega \times (0,T),{ \ }\partial \Omega_T = \partial \Omega \times (0,T).
\end{multline}

In this paper we assume that the fluid in $\Omega_{A,T}$ is an incompressible one, and that the fluid in $\Omega_{B,T}$ is a compressible one. Let us state physical notations. Let $\rho_A = \rho_A ( x , t)$, $v_A = v_A ( x , t) = { }^t (v^A_1 , v^A_2 , v^A_3 )$, $\pi_A = \pi_A (x,t)$, and $\mu_A = \mu_A (x,t)$ be the density, the velocity, the pressure, and the viscosity of the fluid in $\Omega_A (t)$, respectively. Let $\rho_B = \rho_B ( x , t)$, $v_B = v_B ( x , t) = { }^t (v^B_1 , v^B_2 , v^B_3 )$, $\pi_B = \pi_B (x,t)$, and $\mu_B = \mu_B (x,t)$, $\lambda_B = \lambda_B (x,t)$ be the density, the velocity, the pressure, and two viscosities of the fluid in $\Omega_B (t)$, respectively. Let $v_S = v_S (x , t) = { }^t (v^S_1 , v^S_2 , v^S_3 )$ be the motion velocity of the evolving surface $\Gamma (t)$. The symbol $\rho_0 >0$ denotes the density of the interface $\Gamma (t)$, and $\pi_0 \in \mathbb{R} \setminus \{ 0 \}$ denotes the surface tension (coefficient) at $x \in \Gamma (t)$. We assume that $\rho_A$, $v_A$, $\pi_A$, $\mu_A$, $\rho_B$, $v_B$, $\pi_B$, $\mu_B$, $\lambda_B$, $v_S$ are smooth functions in $\mathbb{R}^4$, and that $\rho_0$, $\pi_0$ are constants.

Let us explain the background and the ansatz of this study. Let us now consider the phase transition phenomenon on ice melting. There exists a layer between the ice and the air. The layer is called a \emph{quasi-liquid layer} (see Kuroda-Lacmann \cite{KL82}, Furukawa-Yamamoto-Kuroda \cite{FYK87}). It is well-known that a quasi-liquid layer has both liquid and solid properties. Experiments in Sazaki-Zepeda-Nakatsubo-Yokoyama-Furukawa \cite{SZNYF10} showed that ice particles change into particles in the quasi-liquid layer and then the particles in the layer change into water vapor. A similar process occurs when ice forms. Therefore, we can consider a two-phase problem with a phase transition as a three-phase problem. In this paper, we admit the existence of a surface mass at the interface $\Gamma(t)$, and assume that the particles at the interface can change into both particles in $\Omega_A (t)$ and particles in $\Omega_B (t)$ (see Figure \ref{Fig1}).

Let us explain the key restriction of mathematical modeling of multiphase flow systems with phase transition. We assume that
\begin{equation}\label{eq12}
\begin{cases}
{\rm{div}} v_A =0 & \text{ in }\Omega_{A,T},\\
v_B \cdot n_\Omega =0 & \text{ on }\partial \Omega_T,
\end{cases}{ \ }
\begin{cases}
v_A \cdot n_\Gamma = v_S \cdot n_\Gamma & \text{ on } \Gamma_T,\\
v_B \cdot n_\Gamma = v_S \cdot n_\Gamma  & \text{ on } \Gamma_T.
\end{cases}
\end{equation}
The condition ${\rm{div}} v_A =0$ means the incompressibility condition of the fluid in $\Omega_{A,T}$, and $v_B \cdot n_\Omega =0$ means that fluid particles do not go out of the domain $\Omega$. In general, we assume several jump conditions when we make models for multiphase flow with phase transition. This paper does not use assumptions on jump conditions to derive our systems. See Slattery-Sagis-Oh \cite{SSO07} for several jump conditions on interfacial phenomena.

This paper has three purposes. The first one is to make the following inviscid model for multiphase flow with phase transition:
\begin{equation}\label{eq13}
\begin{cases}
{\displaystyle{D_t^A \rho_A = {\rm{div}}\left\{ \frac{\rho_0}{\pi_0}\pi_A v_A \right\} }}& \text{ in } \Omega_{A,T},\\
{\displaystyle{D_t^B \rho_B + ({\rm{div}} v_B) \rho_B = {\rm{div}}\left\{ \frac{\rho_0}{\pi_0}\pi_B v_B \right\} }}& \text{ in } \Omega_{B,T},\\
{\displaystyle{\rho_A D_t^A v_A + {\rm{grad}}\pi_A  = - {\rm{div}}\left\{ \frac{\rho_0}{\pi_0}\pi_A v_A \right\} v_A }}& \text{ in } \Omega_{A,T},\\
{\displaystyle{\rho_B D_t^B v_B + {\rm{grad}}\pi_B  = - {\rm{div}}\left\{ \frac{\rho_0}{\pi_0}\pi_B v_B \right\} v_B }}& \text{ in } \Omega_{B,T},\\
\pi_0 H_\Gamma n_\Gamma - \pi_A n_\Gamma +  \pi_B n_\Gamma ={ }^t(0,0,0) & \text{ on } \Gamma_T
\end{cases}
\end{equation}
with \eqref{eq12}. Here $D_t^A f = \partial_t f + (v_A \cdot \nabla ) f $, $D_t^B f = \partial_t f + (v_B \cdot \nabla ) f $, $(v_A \cdot \nabla ) f = v^A_1 \partial_1 f + v^A_2 \partial_2 f + v^A_3 \partial_3 f$, $(v_B \cdot \nabla ) f = v^B_1 \partial_1 f + v^B_2 \partial_2 f + v^B_3 \partial_3 f$, ${\rm{div}}v_A = \nabla \cdot v_A$, ${\rm{div}}v_B = \nabla \cdot v_B$, ${\rm{grad}} f = \nabla f$, $\nabla = { }^t (\partial_1 , \partial_2 , \partial_3)$, $\partial_i = \partial/{\partial x_i}$, and $\partial_t = \partial/{\partial t}$. The symbol $H_\Gamma = H_\Gamma (x , t )$ denotes the \emph{mean curvature in the direction} $n_\Gamma$ defined by $H_\Gamma  = - {\rm{div}}_\Gamma n_\Gamma = - (\partial_1^\Gamma n_1^\Gamma + \partial_2^\Gamma n^\Gamma_2 + \partial_3^\Gamma n^\Gamma_3)$, where $\partial^\Gamma_i f := \sum_{j=1}^3(\delta_{ij} - n^\Gamma_i n^\Gamma_j ) \partial_j f = \partial_j f - n_j^\Gamma (n_\Gamma \cdot \nabla )f$. More precisely, under the restriction \eqref{eq12} we apply an energetic variational approach to derive system \eqref{eq13}. See subsection \ref{subsec42} for details. Remark that the motion velocity $v_S$ is given by
\begin{equation*}
v_S = \frac{1}{\pi_0 H_\Gamma}\{ \pi_A (v_A \cdot n_\Gamma) - \pi_B (v_B \cdot n_\Gamma) \} n_\Gamma \text{ on } \Gamma_T
\end{equation*}
if $H_\Gamma \neq 0$ and $P_\Gamma v_S = { }^t(0,0,0)$. See the proof of Theorem \ref{thm22} in Section \ref{sect3} for details.

The second one is to make the following viscous model for multiphase flow with phase transition:
\begin{equation}\label{eq14}
\begin{cases}
{\displaystyle{D_t^A \rho_A= - \frac{\rho_0}{\pi_0} {\rm{div}}( \mathcal{T}_A v_A) }}& \text{ in } \Omega_{A,T},\\
{\displaystyle{D_t^B \rho_B + ({\rm{div}} v_B) \rho_B  = - \frac{\rho_0}{\pi_0} {\rm{div}} ( \mathcal{T}_B v_B )}} & \text{ in } \Omega_{B,T},\\
{\displaystyle{\rho_A D_t^A v_A  =  {\rm{div}} \mathcal{T}_A + \frac{\rho_0}{\pi_0} {\rm{div}}( \mathcal{T}_A v_A) v_A }}& \text{ in } \Omega_{A,T},\\
{\displaystyle{\rho_B D_t^B v_B  =  {\rm{div}} \mathcal{T}_B + \frac{\rho_0}{\pi_0} {\rm{div}}( \mathcal{T}_B v_B)v_B  }}& \text{ in } \Omega_{B,T},\\
\pi_0 H_\Gamma n_\Gamma + \widetilde{\mathcal{T}}_A n_\Gamma - \widetilde{\mathcal{T}}_B n_\Gamma= { }^t( 0,0,0) & \text{ on } \Gamma_T
\end{cases}
\end{equation}
with
\begin{equation}\label{eq15}
\begin{cases}
{\rm{div}} v_A =0 & \text{ in }\Omega_{A,T},\\
v_A \cdot n_\Gamma = v_S \cdot n_\Gamma & \text{ on } \Gamma_T,\\
v_B \cdot n_\Gamma = v_S \cdot n_\Gamma  & \text{ on } \Gamma_T,
\end{cases}{ \ }\begin{cases}
v_B = { }^t (0,0,0) & \text{ in }\partial \Omega_T,\\
P_\Gamma v_A = { }^t (0,0,0) & \text{ on } \Gamma_T,\\
P_\Gamma v_B = { }^t (0,0,0) & \text{ on } \Gamma_T,
\end{cases}
\end{equation}
where
\begin{equation}\label{eq16}
\begin{cases}
\mathcal{T}_A  =\mathcal{T}_A (\pi_A , v_A) :=\mu_A D (v_A) - \pi_A I_{3 \times 3} ,\\
\mathcal{T}_B  = \mathcal{T}_B (\pi_B, v_B ) :=\mu_B D (v_B)  + \lambda_B ({\rm{div}} v_B)I_{3\times 3} - \pi_B I_{3 \times 3},\\ 
\widetilde{\mathcal{T}}_A =\widetilde{\mathcal{T}}_A (\pi_A , v_A) :=\mu_A (n_\Gamma \cdot (n_\Gamma \cdot \nabla ) v_A) - \pi_A ,\\
\widetilde{\mathcal{T}}_B  = \widetilde{\mathcal{T}}_B (\pi_B, v_B ) :=\mu_B (n_\Gamma \cdot (n_\Gamma \cdot \nabla ) v_B) + \lambda_B ({\rm{div}} v_B) - \pi_B . 
\end{cases}
\end{equation}
Here $D (v_A) =\{ { }^t(\nabla v_A) +(\nabla v_A) \}/2$ and $D (v_B) =\{ { }^t(\nabla v_B) +(\nabla v_B) \}/2$. The symbol $I_{3\times 3}$ denotes the $3 \times 3$ identity matrix, and $P_\Gamma = P_\Gamma (x ,t)$ the \emph{orthogonal projection to a tangent space} defined by $P_\Gamma = I_{3 \times 3} - n_\Gamma \otimes n_\Gamma$, where  $\otimes$ is the tensor product. More precisely, under the restriction \eqref{eq15} we apply an energetic variational approach to derive system \eqref{eq14}. See subsection \ref{subsec41} for details. Remark that the motion velocity $v_S$ is given by
\begin{equation*}
v_S = \frac{1}{\pi_0 H_\Gamma}\{ - \widetilde{\mathcal{T}}_A (v_A \cdot n_\Gamma) +  \widetilde{\mathcal{T}}_B (v_B \cdot n_\Gamma) \} n_\Gamma \text{ on } \Gamma_T
\end{equation*}
if $H_\Gamma \neq 0$ and $P_\Gamma v_S = { }^t(0,0,0)$. See the proof of Theorem \ref{thm22} in Section \ref{sect3} for details.
See the proof of Theorem \ref{thm22} in Section \ref{sect3} for details.

The third one is to investigate the conservation and energy laws, and the conservative form of our systems \eqref{eq13} and \eqref{eq14}. In fact, any solution to system \eqref{eq14} with \eqref{eq15} satisfies that for $t_1 < t_2$,
\begin{multline}\label{eq17}
\int_{\Omega_A (t_2)} \rho_A (x,t_2) { \ } d x + \int_{\Omega_B (t_2)} \rho_B (x,t_2) { \ } d x + \int_{\Gamma (t_2)} \rho_0 { \ }d \mathcal{H}_x^2 \\
= \int_{\Omega_A (t_1)} \rho_A (x,t_1) { \ } d x + \int_{\Omega_B (t_1)} \rho_B (x,t_1) { \ } d x + \int_{\Gamma (t_1)} \rho_0 { \ }d \mathcal{H}_x^2,
\end{multline}
and
\begin{multline}\label{eq18}
\int_{\Omega_A (t_2)} \frac{1}{2} \rho_A  \vert v_A  \vert^2{ \ }d x + \int_{\Omega_B (t_2)} \frac{1}{2} \rho_B  \vert v_B  \vert^2{ \ }d x + \int_{t_1}^{t_2 } \int_{\Omega_A (t)} \mu_A  \vert D (v_A) \vert^2 { \ }d x d t\\ + \int_{t_1}^{t_2} \int_{\Omega_B (t)} (\mu_B  \vert D (v_B) \vert^2 + \lambda_B  \vert { \rm{div} }v_B  \vert ^2){ \ }d x dt\\
= \int_{\Omega_A (t_1)} \frac{1}{2} \rho_A  \vert v_A  \vert^2{ \ }d x + \int_{\Omega_B (t_1)} \frac{1}{2} \rho_B  \vert v_B  \vert^2{ \ }d x\\
 + \int_{t_1}^{t_2} \int_{\Gamma (t)} ({\rm{div}}_\Gamma v_S )\pi_0 { \ }d \mathcal{H}^2_x d t + \int_{t_1}^{t_2 } \int_{\Omega_A (t)}\left( \frac{\rho_0}{2 \pi_0} {\rm{div}} ( \mathcal{T}_A v_A)  \vert v_A \vert^2\right) { \ }d x d t\\ + \int_{t_1}^{t_2} \int_{\Omega_B (t)} \left( ({\rm{div}} v_B) \pi_B + \frac{\rho_0}{2 \pi_0}  {\rm{div}}(\mathcal{T}_B v_B)  \vert v_B \vert^2\right) { \ }d x dt. 
\end{multline}
Moreover, any solution to system \eqref{eq13} with \eqref{eq12} satisfies \eqref{eq17} and \eqref{eq18} with $\mu_A \equiv \mu_B \equiv \lambda_B \equiv 0$. Here $ \vert D (v_A) \vert^2= D(v_A): D (v_A)$, $ \vert D (v_B) \vert^2= D(v_B): D (v_B)$, and $d \mathcal{H}^2_x$ denotes the \emph{2-dimensional Hausdorff measure}. We often call \eqref{eq17} and \eqref{eq18}, the \emph{law of conservation of mass} and the \emph{energy equality}, respectively. We easily check that system \eqref{eq14} with ${\rm{div}} v_A =0$ satisfies the following conservative form:
\begin{equation}\label{eq19}
\begin{cases}
{\displaystyle{\partial_t \rho_A + {\rm{div}}\left( \rho_A v_A + \frac{\rho_0}{\pi_0} \mathcal{T}_A v_A \right) = 0}}& \text{ in } \Omega_{A,T},\\
{\displaystyle{\partial_t \rho_B + {\rm{div}}\left( \rho_B v_B + \frac{\rho_0}{\pi_0} \mathcal{T}_B v_B \right) = 0}}& \text{ in } \Omega_{B,T},\\
{\displaystyle{\partial_t (\rho_A v_A) + {\rm{div}} \left( \rho_A v_A \otimes v_A - \mathcal{T}_A \right) = { }^t(0,0,0)}}& \text{ in } \Omega_{A,T},\\
{\displaystyle{\partial_t (\rho_B v_B) + {\rm{div}}\left( \rho_B v_B \otimes v_B - \mathcal{T}_B \right) = { }^t (0,0,0)}}& \text{ in } \Omega_{B,T}.
\end{cases}
\end{equation}

\begin{remark}\label{rem11}$(\rm{i})$ If we change \eqref{eq12} or \eqref{eq15} to another restriction, then we can derive another system by applying our approaches.\\
$(\rm{ii})$ If we choose $\rho_0 =0$, then we derive the usual Euler and Navier-Stokes systems, or a two-phase flow system with surface tension by our approaches.
\end{remark}

Let us explain three key ideas of deriving our multiphase flow systems with phase transition. The first point is to acknowledge the existence of the interface, that is, we assume that the density of the interface is a positive constant $\rho_0$. The second point is to divide the condition $(v_A - v_B) \cdot n_\Gamma = 0$ into $v_A \cdot n_\Gamma = v_S \cdot n_\Gamma$ and $v_B \cdot n_\Gamma = v_S \cdot n_\Gamma$. The third point is to make use of an energetic variational approach. More precisely, we apply an energetic variational approach in order to look for functions $\Phi_A$ and $\Phi_B$ satisfying
\begin{equation*}
\begin{cases}
D_t^A \rho_A  = \Phi_A & \text{ in } \Omega_{A,T},\\
D_t^B \rho_B + ({\rm{div}} v_B) \rho_B = \Phi_B & \text{ in } \Omega_{B,T},
\end{cases}
\end{equation*}
and
\begin{equation*}
\frac{d }{d t} \left( \int_{\Omega_A (t)} \rho_A (x,t) { \ }d x + \int_{\Omega_B(t)} \rho_B (x,t) { \ }d x + \int_{\Gamma (t)} \rho_0 { \ }d \mathcal{H}^2_x  \right) =0 .
\end{equation*}

An energetic variational approach is a method for deriving PDEs by using the forces derived from a variation of energies. Gyarmati \cite{Gya70} applied an energetic variational approach, which had been studied by Strutt \cite{Str73} and Onsager \cite{Ons31a, Ons31b}, to make several models for fluid dynamics in domains. Hyon-Kwak-Liu \cite{HKL10} made use of their energetic variational approach to study complex fluid in domains. Koba-Sato \cite{KS17} applied their energetic variational approach to make their non-Newtonian fluid systems in domains. Koba-Liu-Giga \cite{KLG17} and Koba \cite{K18} employed their energetic variational approaches to derive their fluid systems on an evolving closed surface. However, these papers \cite{HKL10,KS17,KLG17,K18} did not consider multiphase flow. This paper improves and modifies their methods in \cite{HKL10,KS17,KLG17,K18} to derive our multiphase flow systems. See Section \ref{sect4} for details.

Finally, we introduce the results related to this paper. Bothe-Pr\"{u}ss \cite{BP12,BP17} considered multiphase flow with interface effects. In \cite{BP12}, they made their models for multiphase flow with surface tension and viscosities by applying the \emph{Boussinesq-Scriven law}. In \cite{BP17}, they made use of their jump conditions to make models for multi-component two-phase flow system with phase transition, and to study the individual mass densities of an isothermal mixture of $N$-species in a domain. Although this paper does not consider surface viscosity (surface flow), this paper considers interface effects such as surface tension and phase transition. Note that our models are different from the ones in \cite{BP12, BP17}. See also \cite{KLG17, K18} for models for surface flow.

The outline of this paper is as follows: In Section \ref{sect2}, we state the main results of this paper. In Section \ref{sect3}, we study the law of conservation of mass for multiphase flow with phase transition. In Section \ref{sect4}, we apply an energetic variational approach to make mathematical models for multiphase flow with phase transition. In Section \ref{sect5}, we investigate the conservation and energy laws of our systems. In Appendix, we provide two useful lemmas to derive our systems.

\section{Main Results}\label{sect2}
We first introduce the transport theorems. Then we state the main results.
\begin{definition}[$\Omega_T$ is flowed by the velocity fields $(v_A,v_B,v_S)$]\label{def21} We say that $\Omega_T$ is \emph{flowed by the velocity fields} $(v_A,v_B ,v_S)$ if for each $0< t <T$, $f \in C^1 (\mathbb{R}^4)$, and $\Lambda \subset \Omega$,
\begin{align}
\label{eq21} \frac{d}{d t} \int_{\Omega_A (t) \cap \Lambda} f (x,t) { \ }d x & = \int_{\Omega_A (t) \cap \Lambda} \{ D_t^A f + ({\rm{div}}v_A )f \}{ \ }dx,\\ 
\frac{d}{d t} \int_{\Omega_B (t) \cap \Lambda} f (x,t) { \ }d x & = \int_{\Omega_B (t) \cap \Lambda} \{ D_t^B f + ({\rm{div}} v_B) f \} { \ }dx,\label{eq22}\\ 
\label{eq23} \frac{d}{d t} \int_{\Gamma (t) \cap \Lambda} f (x,t) { \ }d \mathcal{H}^2_x & = \int_{\Gamma (t) \cap \Lambda} \{ D_t^S f + ({\rm{div}}_\Gamma v_S) f \} { \ }d \mathcal{H}^2_x. 
\end{align}
Here $D_t^\sharp f = \partial_t f + (v_\sharp \cdot \nabla) f$, ${\rm{div}}_\Gamma v_S = \partial_1^\Gamma v^S_1 + \partial_2^\Gamma v^S_2 + \partial_3^\Gamma v^S_3$, $\partial_j^\Gamma f = \partial_j f - n^\Gamma_j (n_\Gamma \cdot \nabla )f $, where $\sharp = A,B,S$, and $j=1,2,3$. Note that ${\rm{div}} v_A =0$ in this paper.
\end{definition}
\noindent We often call \eqref{eq21}, \eqref{eq22} the \emph{transport theorems}, and \eqref{eq23} the \emph{surface transport theorem}. The derivation of the surface transport theorem can be founded in \cite{Bet86,GSW89,DE07,KLG17}. Throughout this paper we assume that $\Omega_T$ is \emph{flowed by the velocity fields} $(v_A,v_B ,v_S)$.

Now we state the main results of this paper.
\begin{theorem}[Laws of conservation of mass]\label{thm22}{ \ }\\
$(\rm{i})$ Assume that $(\rho_A, \rho_B, \rho_0, v_A, v_B, v_S, \pi_A, \pi_B,\pi_0)$ satisfy
\begin{equation}\label{eq24}
\begin{cases}
{\displaystyle{D_t^A \rho_A = {\rm{div}}\left\{ \frac{\rho_0}{\pi_0}\pi_A v_A \right\} }}& \text{ in } \Omega_{A,T},\\
{\displaystyle{D_t^B \rho_B + ({\rm{div}} v_B) \rho_B = {\rm{div}}\left\{ \frac{\rho_0}{\pi_0}\pi_B v_B \right\} }}& \text{ in } \Omega_{B,T},\\
\pi_0 H_\Gamma n_\Gamma - \pi_A n_\Gamma +   \pi_B n_\Gamma ={ }^t (0,0,0) & \text{ on } \Gamma_T,
\end{cases}
\end{equation}
and \eqref{eq12}. Then \eqref{eq17} holds for all $0 < t_1 < t_2 <T$.\\
\noindent $(\rm{ii})$ Assume that $(\rho_A, \rho_B, \rho_0, v_A, v_B, v_S, \pi_A, \pi_B,\pi_0, \mu_A, \mu_B, \lambda_B)$ satisfy
\begin{equation}\label{eq25}
\begin{cases}
{\displaystyle{D_t^A \rho_A  = - \frac{\rho_0}{\pi_0} {\rm{div}}( \mathcal{T}_A v_A) }}& \text{ in } \Omega_{A,T},\\
{\displaystyle{D_t^B \rho_B + ({\rm{div}} v_B) \rho_B = - \frac{\rho_0}{\pi_0} {\rm{div}} ( \mathcal{T}_B v_B )}} & \text{ in } \Omega_{B,T},\\
\pi_0 H_\Gamma n_\Gamma + \widetilde{\mathcal{T}}_A n_\Gamma - \widetilde{\mathcal{T}}_B n_\Gamma= { }^t( 0,0,0) & \text{ on } \Gamma_T,
\end{cases}
\end{equation}
and \eqref{eq15}, where $(\mathcal{T}_A, \mathcal{T}_B, \widetilde{\mathcal{T}}_A, \widetilde{\mathcal{T}}_B)$ are defined by \eqref{eq16}. Then \eqref{eq17} holds for all $0 < t_1 < t_2 < T$.
\end{theorem}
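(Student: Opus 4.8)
The plan is to prove both parts of Theorem~\ref{thm22} by differentiating the total mass functional in time, using the transport theorems of Definition~\ref{def21} to convert the time derivative into volume and surface integrals, and then showing that the resulting integrand collapses to zero after integrating the divergence terms over $\Omega_A(t)$ and $\Omega_B(t)$ and applying the divergence theorem together with the boundary/interface conditions in \eqref{eq12} (resp.\ \eqref{eq15}). It suffices to show
\[
\frac{d}{dt}\left( \int_{\Omega_A(t)} \rho_A \, dx + \int_{\Omega_B(t)} \rho_B \, dx + \int_{\Gamma(t)} \rho_0 \, d\mathcal{H}^2_x \right) = 0
\]
for every $t \in (0,T)$, since \eqref{eq17} then follows by integrating in $t$ from $t_1$ to $t_2$.

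First I would apply \eqref{eq21} with $f = \rho_A$ and $\Lambda = \Omega$, \eqref{eq22} with $f = \rho_B$, and \eqref{eq23} with $f = \rho_0$ (a constant, so $D_t^S \rho_0 = 0$), obtaining
\[
\frac{d}{dt}(\text{total mass}) = \int_{\Omega_A(t)} \{ D_t^A \rho_A + ({\rm div}\, v_A)\rho_A \}\, dx + \int_{\Omega_B(t)} \{ D_t^B \rho_B + ({\rm div}\, v_B)\rho_B \}\, dx + \int_{\Gamma(t)} \rho_0 ({\rm div}_\Gamma v_S)\, d\mathcal{H}^2_x.
\]
Since ${\rm div}\, v_A = 0$, the first integrand is just $D_t^A \rho_A$. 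Next I substitute the density evolution equations: in case (i), $D_t^A \rho_A = {\rm div}\{(\rho_0/\pi_0)\pi_A v_A\}$ and $D_t^B \rho_B + ({\rm div}\, v_B)\rho_B = {\rm div}\{(\rho_0/\pi_0)\pi_B v_B\}$ from \eqref{eq24}; in case (ii), the analogous right-hand sides $-(\rho_0/\pi_0){\rm div}(\mathcal{T}_A v_A)$ and $-(\rho_0/\pi_0){\rm div}(\mathcal{T}_B v_B)$ from \eqref{eq25}. Now the two volume integrals are integrals of pure divergences, so the divergence theorem turns them into boundary integrals over $\partial\Omega_A(t) = \Gamma(t)$ and $\partial\Omega_B(t) = \Gamma(t) \cup \partial\Omega$, each weighted by the appropriate unit outer normal.

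The heart of the argument is then the cancellation on the surface terms. For case (i), the boundary contributions are $\int_{\Gamma(t)} (\rho_0/\pi_0)\pi_A (v_A \cdot n_\Gamma)\, d\mathcal{H}^2_x$ from $\Omega_A$ (with $n_\Gamma$ the outer normal of $\Omega_A$), minus $\int_{\Gamma(t)} (\rho_0/\pi_0)\pi_B (v_B \cdot n_\Gamma)\, d\mathcal{H}^2_x$ from $\Omega_B$ (whose outer normal on $\Gamma(t)$ is $-n_\Gamma$), plus $\int_{\partial\Omega} (\rho_0/\pi_0)\pi_B (v_B \cdot n_\Omega)\, d\mathcal{H}^2_x$, which vanishes by $v_B \cdot n_\Omega = 0$ on $\partial\Omega_T$ from \eqref{eq12}. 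Using $v_A \cdot n_\Gamma = v_S \cdot n_\Gamma$ and $v_B \cdot n_\Gamma = v_S \cdot n_\Gamma$, these combine to $\int_{\Gamma(t)} (\rho_0/\pi_0)(\pi_A - \pi_B)(v_S \cdot n_\Gamma)\, d\mathcal{H}^2_x$. The scalar interface equation in \eqref{eq24}, namely $\pi_0 H_\Gamma n_\Gamma - \pi_A n_\Gamma + \pi_B n_\Gamma = 0$, reads $\pi_A - \pi_B = \pi_0 H_\Gamma$, so this term becomes $\int_{\Gamma(t)} \rho_0 H_\Gamma (v_S \cdot n_\Gamma)\, d\mathcal{H}^2_x$. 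It remains to see that the surface transport contribution $\int_{\Gamma(t)} \rho_0 ({\rm div}_\Gamma v_S)\, d\mathcal{H}^2_x$ cancels this. Here I would invoke the decomposition ${\rm div}_\Gamma v_S = {\rm div}_\Gamma (P_\Gamma v_S) - H_\Gamma (v_S \cdot n_\Gamma)$ (using $H_\Gamma = -{\rm div}_\Gamma n_\Gamma$), combined with the surface divergence theorem for the closed manifold $\Gamma(t)$, which kills the tangential part $\int_{\Gamma(t)} {\rm div}_\Gamma(P_\Gamma v_S)\, d\mathcal{H}^2_x = 0$ (no boundary); this is presumably one of the two lemmas promised in the Appendix. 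Thus $\int_{\Gamma(t)} \rho_0 ({\rm div}_\Gamma v_S)\, d\mathcal{H}^2_x = -\int_{\Gamma(t)} \rho_0 H_\Gamma (v_S \cdot n_\Gamma)\, d\mathcal{H}^2_x$, and everything cancels. Case (ii) is identical in structure: the $\Omega_A$ and $\partial\Omega$ boundary terms now produce $-(\rho_0/\pi_0)(\mathcal{T}_A v_A)\cdot n_\Gamma = -(\rho_0/\pi_0)\widetilde{\mathcal{T}}_A (v_A \cdot n_\Gamma)$ once one checks (using $P_\Gamma v_A = 0$, i.e.\ $v_A = (v_A \cdot n_\Gamma) n_\Gamma$ on $\Gamma_T$) that $n_\Gamma \cdot \mathcal{T}_A v_A = \widetilde{\mathcal{T}}_A (v_A \cdot n_\Gamma)$, and similarly for $B$; the $v_B = 0$ condition on $\partial\Omega_T$ kills the outer boundary term, the velocity continuity conditions replace $v_A \cdot n_\Gamma$ and $v_B \cdot n_\Gamma$ by $v_S \cdot n_\Gamma$, and the scalar interface equation in \eqref{eq25}, $\pi_0 H_\Gamma + \widetilde{\mathcal{T}}_A - \widetilde{\mathcal{T}}_B = 0$, supplies the needed identity $-\widetilde{\mathcal{T}}_A + \widetilde{\mathcal{T}}_B = \pi_0 H_\Gamma$.

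The main obstacle I anticipate is the bookkeeping at the interface: one must be scrupulous about the orientation of $n_\Gamma$ as the outer normal of $\Omega_A(t)$ versus $\Omega_B(t)$ (a sign flip), and one must correctly reduce $n_\Gamma \cdot \mathcal{T}_\sharp v_\sharp$ to $\widetilde{\mathcal{T}}_\sharp (v_\sharp \cdot n_\Gamma)$ in case (ii) — this uses the structural definitions in \eqref{eq16} together with $P_\Gamma v_\sharp = 0$ on $\Gamma_T$, and the fact that $D(v_\sharp) n_\Gamma \cdot n_\Gamma = n_\Gamma \cdot (n_\Gamma \cdot \nabla) v_\sharp$ together with $({\rm div}\, v_B) I_{3\times 3} n_\Gamma \cdot n_\Gamma = {\rm div}\, v_B$ and the $\pi_\sharp I_{3\times 3}$ term contributing $-\pi_\sharp$. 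Once the surface divergence theorem on the closed manifold $\Gamma(t)$ and the identity ${\rm div}_\Gamma v_S = {\rm div}_\Gamma(P_\Gamma v_S) - H_\Gamma(v_S \cdot n_\Gamma)$ are in hand, the rest is routine algebra and the claimed cancellation is exact.
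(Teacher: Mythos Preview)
Your proposal is correct and follows essentially the same route as the paper's proof: transport theorems for the three pieces of the total mass, substitution of the density equations, divergence theorem on $\Omega_A(t)$ and $\Omega_B(t)$, and cancellation against the surface term via the interface pressure balance; in case~(ii) the paper isolates the identity $\mathcal{T}_\sharp v_\sharp \cdot n_\Gamma = \widetilde{\mathcal{T}}_\sharp (v_\sharp \cdot n_\Gamma)$ under $P_\Gamma v_\sharp = 0$ as a separate lemma (Lemma~\ref{lem31}), just as you anticipate. One cosmetic difference: the paper's surface divergence theorem (Lemma~\ref{lem61}) already gives $\int_{\Gamma(t)} {\rm div}_\Gamma v_S \, d\mathcal{H}^2_x = -\int_{\Gamma(t)} H_\Gamma (v_S \cdot n_\Gamma)\, d\mathcal{H}^2_x$ for arbitrary $v_S$, so your tangential/normal decomposition is not needed, though it is of course correct.
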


\begin{theorem}[Conservative form, conservation and energy Laws]\label{thm23}{ \ }\\
$(\rm{i})$ Any solution to system \eqref{eq13} with \eqref{eq12} satisfies \eqref{eq17} and \eqref{eq18} with $\mu_A \equiv \mu_B \equiv \lambda_B \equiv 0$.\\
$(\rm{ii})$ Any solution to system \eqref{eq14} with \eqref{eq15} satisfies \eqref{eq17} and \eqref{eq18}.\\
$(\rm{iii})$ If ${\rm{div}} v_A =0$ in $\Omega_{A,T}$, then system \eqref{eq14} satisfies the conservative form \eqref{eq19}.
\end{theorem}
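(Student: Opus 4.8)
The plan is to dispose of part (iii) by a direct manipulation of \eqref{eq14}, to obtain \eqref{eq17} in parts (i)--(ii) for free from Theorem \ref{thm22}, and to concentrate the real work on the energy equality \eqref{eq18}.

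\emph{Part (iii), and the mass law in (i)--(ii).} Since ${\rm div}\,v_A=0$ we have $\partial_t\rho_A+{\rm div}(\rho_A v_A)=D_t^A\rho_A$, while always $\partial_t\rho_B+{\rm div}(\rho_B v_B)=D_t^B\rho_B+({\rm div}\,v_B)\rho_B$; inserting the first two lines of \eqref{eq14} yields the first two lines of \eqref{eq19}. For the momentum lines, for $\sharp\in\{A,B\}$ I would expand $\partial_t(\rho_\sharp v_\sharp)+{\rm div}(\rho_\sharp v_\sharp\otimes v_\sharp)=\big(\partial_t\rho_\sharp+{\rm div}(\rho_\sharp v_\sharp)\big)v_\sharp+\rho_\sharp D_t^\sharp v_\sharp$, substitute the continuity equation just obtained together with the momentum equation of \eqref{eq14}; the two $\tfrac{\rho_0}{\pi_0}{\rm div}(\mathcal T_\sharp v_\sharp)v_\sharp$ contributions cancel and only ${\rm div}\,\mathcal T_\sharp$ survives, giving the last two lines of \eqref{eq19}. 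For (i)--(ii), observe that \eqref{eq24}+\eqref{eq12} (resp. \eqref{eq25}+\eqref{eq15}) is part of the hypotheses of \eqref{eq13}+\eqref{eq12} (resp. \eqref{eq14}+\eqref{eq15}), so Theorem \ref{thm22} already delivers \eqref{eq17}.

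\emph{Kinetic energy balance for \eqref{eq18}.} It suffices to prove that $\frac{d}{dt}\big(\int_{\Omega_A(t)}\tfrac12\rho_A|v_A|^2+\int_{\Omega_B(t)}\tfrac12\rho_B|v_B|^2\big)$ plus the dissipation integrals equals the $t$-integrand on the right of \eqref{eq18}, then integrate over $[t_1,t_2]$. Applying the transport theorems \eqref{eq21}--\eqref{eq22} to $f=\tfrac12\rho_\sharp|v_\sharp|^2$, using ${\rm div}\,v_A=0$ for $\sharp=A$ and grouping the $({\rm div}\,v_B)\tfrac12\rho_B|v_B|^2$ term with $D_t^B f$ for $\sharp=B$, the integrand becomes $\tfrac12|v_\sharp|^2\big(D_t^\sharp\rho_\sharp+({\rm div}\,v_\sharp)\rho_\sharp\big)+\rho_\sharp v_\sharp\cdot D_t^\sharp v_\sharp$. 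Substituting the density and momentum equations of \eqref{eq14} --- or of \eqref{eq13}, read as \eqref{eq14} with $\mathcal T_A=-\pi_A I_{3\times3}$, $\mathcal T_B=-\pi_B I_{3\times3}$ --- the half-density term combines with half of the $\tfrac{\rho_0}{\pi_0}{\rm div}(\mathcal T_\sharp v_\sharp)|v_\sharp|^2$ term and leaves $v_\sharp\cdot{\rm div}\,\mathcal T_\sharp+\tfrac{\rho_0}{2\pi_0}{\rm div}(\mathcal T_\sharp v_\sharp)|v_\sharp|^2$. Now integrate $v_\sharp\cdot{\rm div}\,\mathcal T_\sharp$ by parts over $\Omega_A(t)$ (boundary $\Gamma(t)$, outer normal $n_\Gamma$) and over $\Omega_B(t)$ (boundary $\Gamma(t)\cup\partial\Omega$, outer normal $-n_\Gamma$ on $\Gamma(t)$ and $n_\Omega$ on $\partial\Omega$): the $\partial\Omega$ integral vanishes because $v_B=0$ there in the viscous case and $v_B\cdot n_\Omega=0$ with $\mathcal T_B n_\Omega=-\pi_B n_\Omega$ in the inviscid case; the bulk term $-D(v_\sharp):\mathcal T_\sharp$ equals $-\mu_A|D(v_A)|^2$ (using ${\rm div}\,v_A=0$) and $-(\mu_B|D(v_B)|^2+\lambda_B|{\rm div}\,v_B|^2-\pi_B\,{\rm div}\,v_B)$; and what remains is the interface term $\int_{\Gamma(t)}\big(v_A\cdot(\mathcal T_A n_\Gamma)-v_B\cdot(\mathcal T_B n_\Gamma)\big)\,d\mathcal H^2_x$. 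Moving the dissipation to the left, the identity matches the $t$-integrand of \eqref{eq18} provided this interface term equals $\int_{\Gamma(t)}({\rm div}_\Gamma v_S)\pi_0\,d\mathcal H^2_x$.

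\emph{The interface identity (the crux), and conclusion.} On $\Gamma(t)$ the conditions $P_\Gamma v_A=P_\Gamma v_B={}^t(0,0,0)$ from \eqref{eq15} give $v_\sharp=(v_\sharp\cdot n_\Gamma)n_\Gamma$, hence $v_\sharp\cdot(\mathcal T_\sharp n_\Gamma)=(v_\sharp\cdot n_\Gamma)(n_\Gamma\cdot\mathcal T_\sharp n_\Gamma)=(v_\sharp\cdot n_\Gamma)\widetilde{\mathcal T}_\sharp$ by \eqref{eq16}; in the inviscid case $\mathcal T_\sharp n_\Gamma=-\pi_\sharp n_\Gamma$ is automatically normal, so the same identity holds with $\widetilde{\mathcal T}_\sharp=-\pi_\sharp$ and no tangential-vanishing hypothesis is needed --- this is precisely why part (i) only requires \eqref{eq12}. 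The kinematic conditions $v_A\cdot n_\Gamma=v_S\cdot n_\Gamma=v_B\cdot n_\Gamma$ then turn the interface term into $\int_{\Gamma(t)}(v_S\cdot n_\Gamma)(\widetilde{\mathcal T}_A-\widetilde{\mathcal T}_B)\,d\mathcal H^2_x$, and dotting the interface stress balance (last line of \eqref{eq14}, resp. \eqref{eq13}) with the unit vector $n_\Gamma$ gives $\widetilde{\mathcal T}_A-\widetilde{\mathcal T}_B=-\pi_0 H_\Gamma$. Finally, since $H_\Gamma=-{\rm div}_\Gamma n_\Gamma$ and $\Gamma(t)$ is a closed surface, the surface divergence theorem on a closed surface gives $\int_{\Gamma(t)}{\rm div}_\Gamma v_S\,d\mathcal H^2_x=-\int_{\Gamma(t)}H_\Gamma(v_S\cdot n_\Gamma)\,d\mathcal H^2_x$, so the interface term equals $-\pi_0\int_{\Gamma(t)}H_\Gamma(v_S\cdot n_\Gamma)\,d\mathcal H^2_x=\pi_0\int_{\Gamma(t)}{\rm div}_\Gamma v_S\,d\mathcal H^2_x$, as required. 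Integrating the resulting pointwise-in-$t$ identity over $[t_1,t_2]$ proves \eqref{eq18}, with all $\mu$- and $\lambda$-terms absent in case (i). The main obstacle is this interface computation: keeping the normal/tangential split, the kinematic conditions, the stress balance, and the closed-surface divergence theorem consistent (and verifying that the inviscid case genuinely dispenses with the $P_\Gamma v_\sharp={}^t(0,0,0)$ assumptions); the rest is the standard energy method wired to the transport theorems of Definition \ref{def21}.
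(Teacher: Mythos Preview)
Your proposal is correct and follows essentially the same route as the paper: part~(iii) by direct manipulation, the mass law from Theorem~\ref{thm22}, and the energy equality by transport theorems plus integration by parts, with the interface term reduced via $P_\Gamma v_\sharp=0$, the kinematic conditions, the stress balance, and the closed-surface divergence identity \eqref{eq61}. The paper only spells out the viscous case of \eqref{eq18} and leaves the rest to the reader, so your write-up is in fact more detailed, but the ingredients and their assembly are the same.
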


We prove Theorem \ref{thm22} in Section \ref{sect3} and Theorem \ref{thm23} in Section \ref{sect5}. In Section \ref{sect4}, we derive our systems \eqref{eq13} and \eqref{eq14}.

\section{Laws of Conservation of Mass}\label{sect3}
Let us immediately derive the one of the main results of this paper.
\begin{proof}[Proof of Theorem \ref{thm22}]
We first show $(\rm{i})$. From \eqref{eq24}, we have
\begin{equation*}
\pi_0 H_\Gamma (v_S \cdot n_\Gamma) - \pi_A (v_S \cdot n_\Gamma) +  \pi_B (v_S \cdot n_\Gamma) = 0 \text{ on } \Gamma_T.
\end{equation*}
Since $\pi_0 \neq 0$ by assumption, we use \eqref{eq12} to derive
\begin{equation}\label{eq31}
H_\Gamma (v_S \cdot n_\Gamma) = \frac{1}{\pi_0} \pi_A (v_A \cdot n_\Gamma ) - \frac{1}{\pi_0} \pi_B (v_B \cdot n_\Gamma) \text{ on }\Gamma_T.  
\end{equation}
Using the transport theorems \eqref{eq21} and \eqref{eq22}, we check that
\begin{multline}\label{eq32}
\frac{d }{d t} \left( \int_{\Omega_A (t)} \rho_A (x,t) { \ }d x + \int_{\Omega_B(t)} \rho_B (x,t) { \ }d x + \int_{\Gamma (t)} \rho_0 { \ }d \mathcal{H}^2_x  \right)\\
 = \int_{\Omega_A (t)} D_t^A \rho_A { \ }d x + \int_{\Omega_B(t)} \{ D_t^B \rho_B + ({\rm{div}} v_B) \rho_B \} { \ }d x + \frac{d}{dt} \int_{\Gamma (t)}\rho_0{ \ }d \mathcal{H}^2_x.
\end{multline}
Applying the surface transport and divergence theorems \eqref{eq23}, \eqref{eq61} with \eqref{eq31}, we find that
\begin{multline*}
\frac{d}{dt}\int_{\Gamma (t)} \rho_0 { \ }d \mathcal{H}^2_x = \int_{\Gamma (t)} ({\rm{div}}_\Gamma v_S) \rho_0 { \ }d \mathcal{H}_x^2 = - \int_{\Gamma (t)} \rho_0 H_\Gamma (v_S \cdot n_\Gamma) { \ }d \mathcal{H}_x^2\\
= - \int_{\Gamma (t)} \frac{\rho_0}{\pi_0}\pi_A (v_A \cdot n_\Gamma ){ \ }d \mathcal{H}_x^2 - \int_{\partial \Omega} \frac{\rho_0}{\pi_0}\pi_B (v_B \cdot n_\Omega ){ \ }d \mathcal{H}_x^2 + \int_{\Gamma (t)} \frac{\rho_0}{\pi_0}\pi_B (v_B \cdot n_\Gamma ) { \ }d \mathcal{H}_x^2.
\end{multline*}
Note that $v_B \cdot n_\Omega =0$. Using the divergence theorem, we have
\begin{equation}\label{eq33}
\frac{d}{d t}\int_{\Gamma (t)} \rho_0 { \ }d \mathcal{H}_x^2 = - \int_{\Omega_A (t)}{\rm{div}}\left\{  \frac{\rho_0}{\pi_0} \pi_A v_A \right\} { \ }d x - \int_{\Omega_B (t)} {\rm{div}} \left\{ \frac{\rho_0}{\pi_0} \pi_B v_B \right\} { \ }d x.
\end{equation}
By \eqref{eq32}, \eqref{eq33}, and \eqref{eq24}, we see that
\begin{multline*}
\frac{d }{d t} \left( \int_{\Omega_A (t)} \rho_A(x,t) { \ }d x + \int_{\Omega_B(t) } \rho_B(x,t) { \ }d x + \int_{\Gamma (t)} \rho_0 { \ }d \mathcal{H}^2_x  \right)\\
 = \int_{\Omega_A (t)} \left( D_t^A \rho_A - {\rm{div}}\left\{  \frac{\rho_0}{\pi_0} \pi_A v_A \right\} \right)  { \ }d x\\ + \int_{\Omega_B(t)} \left(D_t^B \rho_B  + ({\rm{div}} v_B) \rho_B - {\rm{div}}\left\{  \frac{\rho_0}{\pi_0} \pi_B v_B \right\}  \right) { \ }d x = 0.
\end{multline*}
Integrating with respect to $t$, we have \eqref{eq17}. Therefore, we see $(\rm{i})$.

Before proving $(\rm{ii})$ we prepare the following lemma.
\begin{lemma}\label{lem31}If $P_\Gamma v_A = { }^t (0,0,0)$ and $P_\Gamma v_B = { }^t (0,0,0)$ on $\Gamma (t)$, then
\begin{align}
D (v_A) v_A \cdot n_\Gamma&= (n_\Gamma \cdot (n_\Gamma \cdot \nabla ) v_A) (v_A \cdot n_\Gamma ) \text{ on }\Gamma (t),\label{eq34}\\
D (v_B) v_B \cdot n_\Gamma &= (n_\Gamma \cdot (n_\Gamma \cdot \nabla ) v_B) (v_B \cdot n_\Gamma) \text{ on }\Gamma (t).\label{eq35}
\end{align}
\end{lemma}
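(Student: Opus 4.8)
The plan is to use the hypotheses $P_\Gamma v_A = {}^t(0,0,0)$ and $P_\Gamma v_B = {}^t(0,0,0)$ in the equivalent pointwise form $v_A = (v_A\cdot n_\Gamma)n_\Gamma$ and $v_B = (v_B\cdot n_\Gamma)n_\Gamma$ on $\Gamma(t)$, which is immediate from $P_\Gamma = I_{3\times 3} - n_\Gamma\otimes n_\Gamma$. Since it is enough to treat \eqref{eq34} (the argument for \eqref{eq35} being word for word the same with $B$ in place of $A$), I would first reduce the computation via the elementary identity $D(v_A)v_A = \tfrac12 (v_A\cdot\nabla)v_A + \tfrac14\nabla |v_A|^2$, which follows directly from $D(v_A) = \{{}^t(\nabla v_A)+(\nabla v_A)\}/2$. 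Hence it suffices to evaluate $(v_A\cdot\nabla)v_A\cdot n_\Gamma$ and $(n_\Gamma\cdot\nabla)|v_A|^2$ on $\Gamma(t)$.

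For the first quantity I would split the full gradient into tangential and normal parts, $\partial_j = \partial_j^\Gamma + n_j^\Gamma(n_\Gamma\cdot\nabla)$, to get $(v_A\cdot\nabla)v_A = \sum_{j=1}^3 v_j^A\partial_j^\Gamma v_A + (v_A\cdot n_\Gamma)(n_\Gamma\cdot\nabla)v_A$. Substituting $v_A = (v_A\cdot n_\Gamma)n_\Gamma$ into the tangential sum and noting that $\sum_{j=1}^3 n_j^\Gamma \partial_j^\Gamma f = (n_\Gamma\cdot\nabla)f - |n_\Gamma|^2(n_\Gamma\cdot\nabla)f = 0$ since $|n_\Gamma| = 1$, that sum vanishes, so $(v_A\cdot\nabla)v_A = (v_A\cdot n_\Gamma)(n_\Gamma\cdot\nabla)v_A$ on $\Gamma(t)$; taking the inner product with $n_\Gamma$ yields $(v_A\cdot n_\Gamma)(n_\Gamma\cdot(n_\Gamma\cdot\nabla)v_A)$. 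For the second quantity the chain rule gives $(n_\Gamma\cdot\nabla)|v_A|^2 = 2 v_A\cdot(n_\Gamma\cdot\nabla)v_A$, and using $v_A = (v_A\cdot n_\Gamma)n_\Gamma$ once more this equals $2(v_A\cdot n_\Gamma)(n_\Gamma\cdot(n_\Gamma\cdot\nabla)v_A)$. Plugging both into the identity from the first paragraph gives $D(v_A)v_A\cdot n_\Gamma = \tfrac12(v_A\cdot n_\Gamma)(n_\Gamma\cdot(n_\Gamma\cdot\nabla)v_A) + \tfrac12(v_A\cdot n_\Gamma)(n_\Gamma\cdot(n_\Gamma\cdot\nabla)v_A)$, which is \eqref{eq34}.

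The only point requiring care --- and the closest thing to an obstacle --- is that the relation $v_A = (v_A\cdot n_\Gamma)n_\Gamma$ is valid only on $\Gamma(t)$ and not in an ambient neighbourhood, so one cannot differentiate it directly; every full spatial derivative must instead be routed either through the tangential/normal splitting $\partial_j = \partial_j^\Gamma + n_j^\Gamma(n_\Gamma\cdot\nabla)$ (whose tangential part depends only on boundary values of the function) or through the product rule, as done above. Once this is observed the proof is purely algebraic, and \eqref{eq35} is obtained by repeating the three displayed steps with $v_B$, using only $P_\Gamma v_B = {}^t(0,0,0)$ and $|n_\Gamma| = 1$.
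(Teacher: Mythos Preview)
Your proof is correct and follows essentially the same strategy as the paper: reduce to the pointwise relation $v_A=(v_A\cdot n_\Gamma)n_\Gamma$ on $\Gamma(t)$ and substitute it into an algebraic expansion of $D(v_A)v_A\cdot n_\Gamma$. The paper uses the slightly more direct decomposition
\[
D(v_A)v_A\cdot n_\Gamma=\tfrac12\big\{((v_A\cdot\nabla)v_A)\cdot n_\Gamma+((n_\Gamma\cdot\nabla)v_A)\cdot v_A\big\},
\]
and then substitutes $v_A=(v_A\cdot n_\Gamma)n_\Gamma$ only where $v_A$ appears undifferentiated (as the direction of differentiation in the first term and as the dotted vector in the second), which immediately yields the result without the tangential/normal splitting. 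Your detour through $\partial_j=\partial_j^\Gamma+n_j^\Gamma(n_\Gamma\cdot\nabla)$ is harmless but unnecessary here, since in $(v_A\cdot\nabla)v_A$ the coefficients $v_j^A$ are never differentiated and may simply be replaced by $(v_A\cdot n_\Gamma)n_j^\Gamma$ directly; the concern you raise in the last paragraph therefore does not actually arise in either argument.
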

\begin{proof}[Proof of Lemma \ref{lem31}]
We now drive \eqref{eq34}. Since $P_\Gamma v_A = { }^t (0,0,0)$, we find that
\begin{equation*}
v_A = P_\Gamma v_A + (v_A \cdot n_\Gamma )n_\Gamma = (v_A \cdot n_\Gamma )n_\Gamma \text{ on }\Gamma (t).
\end{equation*}
From $D (v_A) = \{ { }^t(\nabla v_A) + (\nabla v_A) \}/2$, we easily check that
\begin{align*}
D (v_A) v_A \cdot n_\Gamma & = \{ ( (v_A \cdot \nabla) v_A) \cdot n_\Gamma ) + ((n_\Gamma \cdot \nabla ) v_A) \cdot v_A \}/2\\
& = \{ ( ([(v_A \cdot n_\Gamma )n_\Gamma] \cdot \nabla) v_A) \cdot n_\Gamma ) + ((n_\Gamma \cdot \nabla )v_A) \cdot [(v_A \cdot n_\Gamma )n_\Gamma] \}/2\\
& = (n_\Gamma \cdot (n_\Gamma \cdot \nabla ) v_A) (v_A \cdot n_\Gamma) \text{ on }\Gamma (t),
\end{align*}
which is \eqref{eq34}. Similarly, we see \eqref{eq35}. Therefore, the lemma follows.
\end{proof}
Now we attack $(\rm{ii})$. By Lemma \ref{lem31} and \eqref{eq16}, we see that
\begin{equation}\label{eq36}
\begin{cases}
\mathcal{T}_A v_A \cdot n_\Gamma = \widetilde{\mathcal{T}}_A (v_A \cdot n_\Gamma) \text{ on } \Gamma (t),\\
\mathcal{T}_B v_B \cdot n_\Gamma = \widetilde{\mathcal{T}}_B (v_B \cdot n_\Gamma) \text{ on } \Gamma (t).
\end{cases}
\end{equation}
From \eqref{eq25} and \eqref{eq15}, we check that
\begin{align*}
\pi_0 H_\Gamma (v_S \cdot n_\Gamma) &= - \widetilde{\mathcal{T}}_A ( v_S \cdot n_\Gamma ) + \widetilde{\mathcal{T}}_B ( v_S \cdot n_\Gamma)\\
 &= - \widetilde{\mathcal{T}}_A (v_A \cdot n_\Gamma ) + \widetilde{\mathcal{T}}_B (v_B \cdot n_\Gamma) \text{ on } \Gamma (t).
\end{align*}
By \eqref{eq36}, we have
\begin{equation}\label{eq37}
H_\Gamma (v_S \cdot n_\Gamma )= - \frac{1}{\pi_0} (\mathcal{T}_A v_A \cdot n_\Gamma ) + \frac{1}{\pi_0} (\mathcal{T}_B v_B \cdot n_\Gamma) \text{ on }\Gamma (t).
\end{equation}
Applying the surface transport theorem \eqref{eq23}, the surface divergence theorem \eqref{eq61}, \eqref{eq37}, and $v_B \vert _{\partial \Omega} ={ }^t(0,0,0)$, we see that
\begin{multline*}
\frac{d}{d t} \int_{\Gamma (t)} \rho_0 { \ } d\mathcal{H}_x^2 = \int_{\Gamma (t)} \rho_0({\rm{div}}_\Gamma v_S) { \ } d \mathcal{H}_x^2= - \int_{\Gamma (t)} \rho_0 H_\Gamma (v_S \cdot n_\Gamma ) { \ } d \mathcal{H}_x^2\\
= \frac{\rho_0}{\pi_0}\int_{\Gamma (t)} \mathcal{T}_A v_A \cdot n_\Gamma { \ } d \mathcal{H}_x^2 + \frac{\rho_0}{\pi_0}\int_{\partial \Omega} \mathcal{T}_B v_B \cdot n_\Omega { \ } d \mathcal{H}_x^2 - \frac{\rho_0}{\pi_0}\int_{\Gamma (t)} \mathcal{T}_B v_B \cdot n_\Gamma { \ } d \mathcal{H}_x^2.
\end{multline*}
Using the divergence theorem, we check that
\begin{equation*}
\frac{d}{d t} \int_{\Gamma (t)} \rho_0 { \ } d\mathcal{H}_x^2 = \int_{\Omega_A (t)}\frac{\rho_0}{\pi_0} {\rm{div}}( \mathcal{T}_A v_A ) { \ } d x + \int_{\Omega_B (t)}\frac{\rho_0}{\pi_0} {\rm{div}} (\mathcal{T}_B v_B) { \ } d x.
\end{equation*}
By the same argument as in $(\rm{i})$, we see $(\rm{ii})$. Therefore, Theorem \ref{thm22} is proved.
\end{proof}

\section{Mathematical Modeling}\label{sect4}

In this section we make mathematical models for multiphase flow with phase transition. We apply our energetic variational approaches to derive system \eqref{eq14} in subsection \ref{subsec41} and system \eqref{eq13} in subsection \ref{subsec42}. 
\subsection{Viscous Model}\label{subsec41}
Under the restriction \eqref{eq15}, we apply an energetic variational approach to derive system \eqref{eq14}. We assume that $(v_A , v_B, v_S)$ satisfies \eqref{eq15}. 

Let $\Phi_A, \Phi_B \in C (\mathbb{R}^4)$. We assume that the dominant equations for the densities of our system are written by
\begin{equation}\label{eq41}
\begin{cases}
D_t^A \rho_A = \Phi_A & \text{ in } \Omega_{A,T},\\
D_t^B \rho_B + ({\rm{div}} v_B) \rho_B = \Phi_B & \text{ in } \Omega_{B,T}.
\end{cases}
\end{equation}
From now we look for $\Phi_A$, $\Phi_B$ satisfying
\begin{equation*}
\frac{d }{d t} \left( \int_{\Omega_A (t)} \rho_A (x,t) { \ }d x + \int_{\Omega_B(t)} \rho_B (x,t) { \ }d x + \int_{\Gamma (t)} \rho_0 { \ }d \mathcal{H}^2_x  \right) =0
\end{equation*}
by applying an energetic variational approach.

In order to derive the momentum equations of our system, we now discuss the variation of the velocities $(v_A,v_B,v_S)$ to the work and dissipation energies for our viscous model. Fix $0<t<T$. We set the work $E_W$ done by pressures $\pi_B$ and $\pi_0$, and the dissipation energies $E_D$ due to viscosities $(\mu_A, \mu_B , \lambda_B)$ as follows:
\begin{equation*}
E_W [v_A, v_B , v_S] =  \int_{\Omega_A (t)} ({\rm{div}} v_B) \pi_B { \ } d x + \int_{\Gamma(t)} ({\rm{div}}_\Gamma v_S) \pi_0 { \ }d \mathcal{H}^2_x,
\end{equation*}
\begin{multline*}
E_{D} [v_A, v_B , v_S] = \int_{\Omega_A (t)} \left( - \frac{\mu_A}{2} \vert D (v_A) \vert^2\right){ \ } d x\\ + \int_{\Omega_B (t)} \left( - \frac{\mu_B}{2} \vert D (v_B) \vert^2 - \frac{\lambda_{B}}{2} \vert {\rm{div}} v_B \vert^2 \right){ \ } d x.
\end{multline*}
Set $E_{D+W}[\cdot] = E_D[\cdot] + E_W[\cdot]$. 
\begin{remark}\label{rem41}$(\rm{i})$ From ${\rm{div}} v_A =0$ we see that $({\rm{div}} v_A) \pi_A =0$.\\
$(\rm{ii})$ Collectively, we call $ ({\rm{div}}v_B) \pi_B $, $({\rm{div}}_\Gamma v_S) \pi_0$, $\mu_A  \vert D (v_A) \vert ^2$, $\mu_B \vert D (v_B) \vert ^2$,  $\lambda_B  \vert {\rm{div}}v_B \vert ^2$ the \emph{energy densities}. See \cite{KS17} and \cite{K18} for mathematical validity of the energy densities.
\end{remark}
We consider the variation of $E_{D+W}$ with respect to the velocities $(v_A , v_B ,v_S)$. Let $\varphi_A, \varphi_B, \varphi_S \in [C^\infty (\mathbb{R}^3)]^3$. For $- 1 < \varepsilon < 1$, $v_A^\varepsilon := v_A + \varepsilon \varphi_A$, $v_B^\varepsilon := v_B + \varepsilon \varphi_B$, $v_S^\varepsilon := v_S + \varepsilon \varphi_S$. We call ($v_A^\varepsilon$, $v_B^\varepsilon$, $v_S^\varepsilon$) a variation of ($v_A$, $v_B$, $v_S$). For each variation $(v_A^\varepsilon, v_B^\varepsilon, v_S^\varepsilon)$,
\begin{multline*}
E_{D+W} [v_A^\varepsilon, v_B^\varepsilon, v_S^\varepsilon] := \int_{\Omega_B (t)} \left\{ ({\rm{div}} v^\varepsilon_B) \pi_B - \frac{\mu_B}{2} \vert D (v^\varepsilon_B) \vert^2- \frac{\lambda_{B}}{2} \vert {\rm{div}} v^\varepsilon_B  \vert^2 \right\}{ \ } d x\\
+ \int_{\Omega_A (t)} \left\{ - \frac{\mu_A}{2} \vert D (v^\varepsilon_A) \vert^2\right\}{ \ } d x + \int_{\Gamma(t)} ({\rm{div}}_\Gamma v^\varepsilon_S) \pi_0 { \ }d \mathcal{H}^2_x.
\end{multline*}
A direct calculation gives
\begin{multline*}
\frac{d}{d \varepsilon} \bigg \vert _{\varepsilon =0 }E_{D+W} [v^\varepsilon_A, v^\varepsilon_B , v^\varepsilon_S]\\
 =  \int_{\Omega_B (t)}\{ ({\rm{div}} \varphi_B) \pi_B - \mu_B D(v_B): D (\varphi_B) - \lambda_B ({\rm{div}} v_B)({\rm{div}} \varphi_B) \} { \ } d x\\
 + \int_{\Omega_A(t)} \{ - \mu_A D (v_A): D (\varphi_A) \} { \ }d x + \int_{\Gamma(t)} ({\rm{div}}_\Gamma \varphi_S) \pi_0 { \ }d \mathcal{H}^2_x.
\end{multline*}
From \eqref{eq15}, we assume that for $- 1  < \varepsilon < 1$,
\begin{equation*}
\begin{cases}
{\rm{div}} v_A^\varepsilon =0 & \text{ in } \Omega_A (t),\\
v^\varepsilon_A \cdot n_\Gamma = v^\varepsilon_S \cdot n_\Gamma & \text{ on } \Gamma (t),\\
v^\varepsilon_B \cdot n_\Gamma = v^\varepsilon_S \cdot n_\Gamma  & \text{ on } \Gamma (t),
\end{cases}{ \ }\begin{cases}
v^\varepsilon_B = { }^t (0,0,0) & \text{ on }\partial \Omega,\\
P_\Gamma v^\varepsilon_A = { }^t (0,0,0) & \text{ on } \Gamma (t),\\
P_\Gamma v^\varepsilon_B = { }^t (0,0,0) & \text{ on } \Gamma (t).
\end{cases}
\end{equation*}
Then we have
\begin{equation}\label{eq42}
\begin{cases}
{\rm{div}} \varphi_A =0 & \text{ in } \Omega_A (t),\\
\varphi_A \cdot n_\Gamma = \varphi_S \cdot n_\Gamma & \text{ on } \Gamma (t),\\
\varphi_B \cdot n_\Gamma = \varphi_S \cdot n_\Gamma  & \text{ on } \Gamma (t),
\end{cases}{ \ }\begin{cases}
\varphi_B = { }^t (0,0,0) & \text{ on }\partial \Omega,\\
P_\Gamma \varphi_A = { }^t (0,0,0) & \text{ on } \Gamma (t),\\
P_\Gamma \varphi_B = { }^t (0,0,0) & \text{ on } \Gamma (t).
\end{cases}
\end{equation}

Now we study the forces derived from a variation of the work and dissipation energies.
\begin{lemma}\label{lem42}
Let $0<t<T$, and $F_A , F_B, F_S \in [C (\mathbb{R}^3) ]^3$. Assume that for every $\varphi_A , \varphi_B, \varphi_S \in [C^\infty (\mathbb{R}^3) ]^3$ satisfying \eqref{eq42},
\begin{equation*}
\frac{d}{d \varepsilon } \bigg\vert_{\varepsilon =0} E_{D + W} [v_A^\varepsilon, v_B^\varepsilon , v_S^\varepsilon] = \int_{\Omega_A (t)} F_A \cdot \varphi_A { \ } d x + \int_{\Omega_B (t)} F_B \cdot \varphi_B { \ } d x + \int_{\Gamma (t)} F_S \cdot \varphi_S { \ } d \mathcal{H}^2_x.
\end{equation*}
Then there is $\pi_A \in C^1 (\overline{\Omega_A (t)})$ such that
\begin{equation}\label{eq43}
\begin{cases}
F_A = {\rm{div}} \mathcal{T}_A (v_A ,\pi_A ) & \text{ in } \Omega_{A} (t),\\
F_B = {\rm{div}} \mathcal{T}_B (v_B , \pi_B ) & \text{ in } \Omega_{B} (t),\\
F_S = - \pi_0 H_\Gamma n_\Gamma - \widetilde{\mathcal{T} }_A(v_A, \pi_A) n_\Gamma + \widetilde{\mathcal{T}}_B(v_B , \pi_B) n_\Gamma & \text{ on } \Gamma (t).
\end{cases}
\end{equation}
Here $(\mathcal{T}_A, \mathcal{T}_B, \widetilde{\mathcal{T}}_A, \widetilde{\mathcal{T}}_B)$ is defined by \eqref{eq16}.
\end{lemma}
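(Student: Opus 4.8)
The plan is to compute the variation $\frac{d}{d\varepsilon}\big|_{\varepsilon=0}E_{D+W}[v_A^\varepsilon,v_B^\varepsilon,v_S^\varepsilon]$, which has already been written out in the text just above the lemma, and then integrate by parts to move all derivatives off the test functions $\varphi_A,\varphi_B,\varphi_S$. Concretely, for the bulk terms I would use the identity $-\mu_A D(v_A):D(\varphi_A) = -\operatorname{div}(\mu_A D(v_A)\varphi_A) + (\operatorname{div}(\mu_A D(v_A)))\cdot\varphi_A$ (using symmetry of $D(v_A)$), and the analogous identity on $\Omega_B(t)$ for the $\mu_B$, $\lambda_B$ and $({\rm div}\,\varphi_B)\pi_B$ terms; for the surface term I would apply the surface divergence theorem on the closed surface $\Gamma(t)$ to rewrite $\int_{\Gamma(t)}({\rm div}_\Gamma\varphi_S)\pi_0\,d\mathcal H^2_x$. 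Since $\pi_0$ is constant, ${\rm div}_\Gamma(\pi_0\varphi_S)=\pi_0\,{\rm div}_\Gamma\varphi_S$, and the surface divergence theorem (the boundary-free version, $\int_\Gamma {\rm div}_\Gamma w\, d\mathcal H^2 = -\int_\Gamma H_\Gamma (w\cdot n_\Gamma)\,d\mathcal H^2$, which is \eqref{eq61} on a closed surface) turns this into $-\int_{\Gamma(t)}\pi_0 H_\Gamma(\varphi_S\cdot n_\Gamma)\,d\mathcal H^2_x$.

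Collecting terms, the boundary contributions from the bulk integrations by parts live on $\partial\Omega$ and on $\Gamma(t)$. The $\partial\Omega$ contributions vanish because $\varphi_B=0$ on $\partial\Omega$ (and the $\Omega_A$ integral has no outer boundary contribution since $\partial\Omega_A(t)=\Gamma(t)$). The $\Gamma(t)$ contributions are $-\int_{\Gamma(t)}(\mu_A D(v_A)n_\Gamma)\cdot\varphi_A$ coming from the $A$-side, and $+\int_{\Gamma(t)}((\mu_B D(v_B)+\lambda_B({\rm div}\,v_B)I_{3\times3})n_\Gamma)\cdot\varphi_B - \int_{\Gamma(t)}\pi_B(\varphi_B\cdot n_\Gamma)$ coming from the $B$-side (with a sign because $n_\Gamma$ is the outer normal of $\Omega_A(t)$, hence the inner normal of $\Omega_B(t)$). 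The key algebraic step is then to use the constraints \eqref{eq42}: $P_\Gamma\varphi_A=P_\Gamma\varphi_B=0$ on $\Gamma(t)$ forces $\varphi_A=(\varphi_A\cdot n_\Gamma)n_\Gamma$ and $\varphi_B=(\varphi_B\cdot n_\Gamma)n_\Gamma$ on $\Gamma(t)$, so $(\mu_A D(v_A)n_\Gamma)\cdot\varphi_A = \mu_A(n_\Gamma\cdot(n_\Gamma\cdot\nabla)v_A)(\varphi_A\cdot n_\Gamma)$ exactly as in Lemma \ref{lem31}, i.e. $=\widetilde{\mathcal T}_A(v_A,\pi_A)(\varphi_A\cdot n_\Gamma) + \pi_A(\varphi_A\cdot n_\Gamma)$ once we reintroduce the pressure; similarly on the $B$-side one obtains $\widetilde{\mathcal T}_B(v_B,\pi_B)(\varphi_B\cdot n_\Gamma)$. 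Finally $\varphi_A\cdot n_\Gamma=\varphi_B\cdot n_\Gamma=\varphi_S\cdot n_\Gamma$ on $\Gamma(t)$, so all three surface pieces can be consolidated against $\varphi_S\cdot n_\Gamma$, and since $P_\Gamma\varphi_S$ is unconstrained the corresponding force has no tangential component, giving $F_S$ of the stated form $-\pi_0H_\Gamma n_\Gamma - \widetilde{\mathcal T}_A n_\Gamma + \widetilde{\mathcal T}_B n_\Gamma$.

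The one genuinely nontrivial point — and the main obstacle — is the appearance of the pressure $\pi_A$: it is \emph{not} present in $E_{D+W}$ (Remark \ref{rem41}(i) notes $({\rm div}\,v_A)\pi_A=0$), so it must be produced as a Lagrange multiplier for the divergence-free constraint ${\rm div}\,\varphi_A=0$. The argument is the standard de Rham / Helmholtz one: the identity $\int_{\Omega_A(t)}F_A\cdot\varphi_A\,dx = -\int_{\Omega_A(t)}\mu_A D(v_A):D(\varphi_A)\,dx + (\text{surface terms absorbed into }F_S)$ must hold for all $\varphi_A\in[C^\infty]^3$ with ${\rm div}\,\varphi_A=0$ and $P_\Gamma\varphi_A=0$ on $\Gamma(t)$; testing against such $\varphi_A$ shows that $F_A - {\rm div}(\mu_A D(v_A))$ annihilates all divergence-free fields vanishing in the normal direction on the boundary, hence equals $-{\rm grad}\,\pi_A$ for some $\pi_A\in C^1(\overline{\Omega_A(t)})$, which gives $F_A = {\rm div}(\mu_A D(v_A)) - {\rm grad}\,\pi_A = {\rm div}\,\mathcal T_A(v_A,\pi_A)$ as claimed, and simultaneously the $-\pi_A(\varphi_A\cdot n_\Gamma)$ term needed to complete $\widetilde{\mathcal T}_A$ on $\Gamma(t)$. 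I would invoke this de Rham-type lemma (the second of the "two useful lemmas" promised for the Appendix) rather than reprove it; the rest is bookkeeping with the integration-by-parts identities and the constraints \eqref{eq42}.
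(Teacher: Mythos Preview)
Your plan is correct and matches the paper's proof essentially step for step: integrate by parts in the bulk and apply the surface divergence theorem \eqref{eq61} to obtain an identity of the form \eqref{eq44}, then isolate $F_B$ by taking $\varphi_A=\varphi_S=0$, isolate $F_A$ by taking $\varphi_B=\varphi_S=0$ and invoking the Helmholtz--Weyl lemma (Lemma \ref{lem62}) to produce $\pi_A$, and finally read off $F_S$ using $\varphi_A\cdot n_\Gamma=\varphi_S\cdot n_\Gamma$ together with an integration by parts of $\int_{\Omega_A(t)}\nabla\pi_A\cdot\varphi_A$. One small remark: when you set $\varphi_S=0$, the constraints \eqref{eq42} force not just $\varphi_A\cdot n_\Gamma=0$ but (combined with $P_\Gamma\varphi_A=0$) actually $\varphi_A=0$ on $\Gamma(t)$, so the test fields are genuinely in $C^\infty_{0,{\rm div}}(\Omega_A(t))$ and Lemma \ref{lem62} applies directly.
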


\begin{proof}[Proof of Lemma \ref{lem42}]
By assumption, we find that for all $\varphi_A , \varphi_B, \varphi_S \in [C^\infty (\mathbb{R}^3) ]^3$ satisfying \eqref{eq42},
\begin{multline*}
 \int_{\Omega_B (t)}\{ ({\rm{div}} \varphi_B) \pi_B - \mu_B D(v_B): D (\varphi_B) - \lambda_B ({\rm{div}} v_B)({\rm{div}} \varphi_B) \} { \ } d x\\
 + \int_{\Omega_A(t)} \{ - \mu_A D (v_A): D (\varphi_A) \} { \ }d x + \int_{\Gamma(t)} ({\rm{div}}_\Gamma \varphi_S) \pi_0 { \ }d \mathcal{H}^2_x\\
 = \int_{\Omega_A (t)} F_A \cdot \varphi_A { \ } d x + \int_{\Omega_B (t)} F_B \cdot \varphi_B { \ } d x + \int_{\Gamma (t)} F_S \cdot \varphi_S { \ } d \mathcal{H}^2_x.
\end{multline*}
Using the surface divergence theorem \eqref{eq61}, integration by parts, and \eqref{eq42}, we have
\begin{multline}\label{eq44}
\int_{\Omega_B (t)} ( - F_B - \nabla \pi_B + {\rm{div}}\{ \mu_B D (v_B ) + \lambda_B ({\rm{div}} v_B ) I_{3 \times 3} \}) \cdot \varphi_B { \ } d x\\
 + \int_{\Omega_A(t)} (- F_A + {\rm{div}}\{ \mu_A D (v_A) \}) \cdot \varphi_A { \ }d x\\
 + \int_{\Gamma(t)} \{ - F_S - \pi_0 H_\Gamma n_\Gamma + \widetilde{\mathcal{T}}_B n_\Gamma - \mu_A (n_\Gamma \cdot ( n_\Gamma \cdot \nabla ) v_A ) n_\Gamma \} \cdot \varphi_S { \ }d \mathcal{H}^2_x = 0.
\end{multline}
Here we used the facts that
\begin{align*}
D (v_A) n_\Gamma \cdot \varphi_A &= (n_\Gamma \cdot (n_\Gamma \cdot \nabla ) v_A) (n_\Gamma \cdot \varphi_A) \text{ on }\Gamma (t),\\
D (v_B) n_\Gamma \cdot \varphi_B &= (n_\Gamma \cdot (n_\Gamma \cdot \nabla ) v_B) (n_\Gamma \cdot \varphi_B) \text{ on }\Gamma (t).
\end{align*}

We now consider the case when  $\varphi_A = { }^t (0,0,0)$ and $\varphi_S = { }^t (0,0,0)$, that is, for every $\varphi_B \in [C^\infty (\mathbb{R}^3)]^3$ satisfying $\varphi_B = { }^t (0,0,0)$ on $\partial \Omega$ and $\varphi_B = { }^t (0,0,0)$ on $\Gamma (t)$,
\begin{equation*}
\int_{\Omega_B (t)} ( - F_B - \nabla \pi_B + {\rm{div}}\{ \mu_B D (v_B ) + \lambda_B ({\rm{div}} v_B ) I_{3 \times 3} \}) \cdot \varphi_B { \ } d x = 0.
\end{equation*}
This shows that
\begin{equation}\label{eq45}
F_B = {\rm{div}} \mathcal{T}_B ( v_B , \pi_B )  \text{ in }\Omega_B (t).
\end{equation}

Next we consider the case when  $\varphi_B = { }^t (0,0,0)$ and $\varphi_S = { }^t (0,0,0)$, that is, for every $\varphi_A \in [C^\infty (\mathbb{R}^3)]^3$ satisfying $\varphi_A = { }^t (0,0,0)$ on $\Gamma (t)$ and ${\rm{div}} \varphi _A =0$ in $\Omega_A (t)$,
\begin{equation*}
\int_{\Omega_A (t)} ( - F_A + {\rm{div}}\{\mu_A D (v_A) \}  ) \cdot \varphi_A { \ } d x =0.
\end{equation*}
Since ${\rm{div}} \varphi_A =0$ in $\Omega_A (t)$, we apply the Helmholtz-Weyl decomposition (Lemma \ref{lem62}) to find that there exists $\pi_A \in C^1(\overline{\Omega_A (t)})$ such that
\begin{equation}\label{eq46}
- F_A + {\rm{div}}\{\mu_A D (v_A) \}  = \nabla \pi_A \text{ in } \Omega_A (t).
\end{equation}

Finally, we consider the case when $\varphi_S \neq { }^t (0,0,0)$. By \eqref{eq44}, \eqref{eq45}, \eqref{eq46}, we see that
\begin{multline*}
\int_{\Omega_A (t)} (\nabla \pi_A) \cdot \varphi_A { \ } d x\\ + \int_{\Gamma(t)} \{ - F_S - \pi_0 H_\Gamma n_\Gamma + \widetilde{\mathcal{T}}_B n_\Gamma - \mu_A (n_\Gamma \cdot ( n_\Gamma \cdot \nabla ) v_A ) n_\Gamma \} \cdot \varphi_S { \ }d \mathcal{H}^2_x = 0.
\end{multline*}
Using integration by parts with ${\rm{div}} \varphi_A = 0$ and $\varphi_A \cdot n_\Gamma = \varphi_S \cdot n_\Gamma$, we have
\begin{equation*}
\int_{\Gamma(t)} ( - F_S - \pi_0 H_\Gamma n_\Gamma - \widetilde{\mathcal{T}}_A n_\Gamma + \widetilde{\mathcal{T}}_B n_\Gamma ) \cdot \varphi_S { \ }d \mathcal{H}^2_x = 0.
\end{equation*}
Since the above equality holds for all $\varphi_S \in [C^\infty (\mathbb{R}^3)]^3$, we see that
\begin{equation}\label{eq47}
F_S = - \pi_0 H_\Gamma n_\Gamma - \widetilde{\mathcal{T}}_A n_\Gamma +  \widetilde{\mathcal{T}}_B n_\Gamma \text{ on } \Gamma (t).
\end{equation}
Therefore, Lemma \ref{lem42} is proved.
\end{proof}

Now we return to derive our momentum equations. We admit the \emph{fundamental principle of the dynamics of fluid motion} (see Chapter B in \cite{Ser59}). We assume that the time rate of change of the momentum equals to the forces derived from the variation of the work done by pressures and the energies dissipation due to viscosities, that is, suppose that for every $0 < t <T$ and $\Lambda \subset \Omega$,
\begin{align*}
\frac{d}{d t} \int_{\Omega_A (t) \cap \Lambda} \rho_A v_A { \ }d x = \int_{\Omega_A (t) \cap \Lambda} F_A { \ } d x,\\
\frac{d}{d t} \int_{\Omega_B (t) \cap \Lambda} \rho_B v_B  { \ }d x = \int_{\Omega_B (t) \cap \Lambda} F_B { \ } d x,\\
0 = \int_{\Gamma (t) \cap \Lambda} F_S { \ } d \mathcal{H}^2_x.
\end{align*}
Here $(F_A, F_B ,F_S)$ is defined by \eqref{eq43}. Remark that we do not consider the momentum on the surface $\Gamma (t)$ since we do not consider surface flow. Applying the transport theorems with \eqref{eq41}, we have
\begin{equation}\label{eq48}
\begin{cases}
\rho_A D_t^A v_A + \Phi_A v_A = F_A & \text{ in }\Omega_{A,T},\\
\rho_B D_t^B v_B + \Phi_B v_B = F_B & \text{ in }\Omega_{B,T},\\
\pi_0 H_\Gamma n_\Gamma + \widetilde{\mathcal{T}}_A n_\Gamma - \widetilde{\mathcal{T}}_B n_\Gamma = { }^t (0,0,0) & \text{ in }\Gamma_T.
\end{cases}
\end{equation}
Using the transport theorems \eqref{eq21}-\eqref{eq23} with \eqref{eq41}, we check that
\begin{multline*}
\frac{d }{d t} \left( \int_{\Omega_A (t)} \rho_A (x,t) { \ }d x + \int_{\Omega_B(t)} \rho_B (x,t) { \ }d x + \int_{\Gamma (t)} \rho_0 { \ }d \mathcal{H}^2_x  \right)\\
 = \int_{\Omega_A (t)} \Phi_A  { \ }d x + \int_{\Omega_B(t)} \Phi_B { \ }d x + \int_{\Gamma (t)} \rho_0 ({\rm{div}}_\Gamma v_S) { \ }d \mathcal{H}^2_x.
\end{multline*}
By the same argument in the proof of Theorem \ref{thm22}, we see that
\begin{multline*}
\frac{d }{d t} \left( \int_{\Omega_A (t)} \rho_A (x,t) { \ }d x + \int_{\Omega_B(t)} \rho_B (x,t) { \ }d x + \int_{\Gamma (t)} \rho_0 { \ }d \mathcal{H}^2_x  \right)\\
 = \int_{\Omega_A (t)} \left( \Phi_A + \frac{\rho_0}{\pi_0} {\rm{div}}(\mathcal{T}_A v_A) \right){ \ }d x + \int_{\Omega_B(t)} \left( \Phi_B + \frac{\rho_0}{\pi_0} {\rm{div}}(\mathcal{T}_B v_B) \right){ \ }d x.
\end{multline*}
Thus, we set  $\Phi_A = - \frac{\rho_0}{\pi_0} {\rm{div}}(\mathcal{T}_A v_A)$ and $ \Phi_B = - \frac{\rho_0}{\pi_0}{\rm{div}}(\mathcal{T}_B v_B)$ to derive
\begin{equation*}
\frac{d }{d t} \left( \int_{\Omega_A (t)} \rho_A (x,t) { \ }d x + \int_{\Omega_B(t)} \rho_B (x,t) { \ }d x + \int_{\Gamma (t)} \rho_0 { \ }d \mathcal{H}^2_x  \right)=0.
\end{equation*}
Therefore, combining \eqref{eq41}, \eqref{eq43}, and \eqref{eq48}, we have system \eqref{eq14}.

\subsection{Inviscid Model}\label{subsec42}
Under the restriction \eqref{eq12} we apply an energetic variational approach to derive system \eqref{eq13}. We assume that $(v_A,v_B,v_S)$ satisfy \eqref{eq12}.

Let $\Psi_A, \Psi_B \in C (\mathbb{R}^4)$. We assume that the dominant equations for the densities of our system are written by
\begin{equation}\label{eq49}
\begin{cases}
D_t^A \rho_A  = \Psi_A & \text{ in } \Omega_{A,T},\\
D_t^B \rho_B + ({\rm{div}} v_B) \rho_B = \Psi_B & \text{ in } \Omega_{B,T}.
\end{cases}
\end{equation}
From now we look for $\Psi_A$, $\Psi_B$ satisfying
\begin{equation*}
\frac{d }{d t} \left( \int_{\Omega_A (t)} \rho_A (x,t) { \ }d x + \int_{\Omega_B(t)} \rho_B (x,t) { \ }d x + \int_{\Gamma (t)} \rho_0 { \ }d \mathcal{H}^2_x  \right) =0
\end{equation*}
by applying an energetic variational approach.

In order to derive the momentum equations of our system, we now discuss the variation of the velocities $(v_A,v_B,v_S)$ to the work for our inviscid model. Fix $0<t<T$. We set the work $E_W$ done by pressures $\pi_B$ and $\pi_0$ as follows:
\begin{equation*}
E_W [v_A, v_B , v_S] =  \int_{\Omega_B (t)} ({\rm{div}} v_B) \pi_B { \ } d x + \int_{\Gamma(t)} ({\rm{div}}_\Gamma v_S) \pi_0 { \ }d \mathcal{H}^2_x.
\end{equation*}
We consider the variation of the work $E_W$ with respect to the velocities $(v_A , v_B ,v_S)$. Let $0<t<T$. Let $\varphi_A, \varphi_B, \varphi_S \in [C^\infty (\mathbb{R}^3)]^3$. For $- 1 < \varepsilon < 1$, $v_A^\varepsilon := v_A + \varepsilon \varphi_A$, $v_B^\varepsilon := v_B + \varepsilon \varphi_B$, $v_S^\varepsilon := v_S + \varepsilon \varphi_S$. We call ($v_A^\varepsilon$, $v_B^\varepsilon$, $v_S^\varepsilon$) a variation of ($v_A$, $v_B$, $v_S$). For each variation $(v_A^\varepsilon, v_B^\varepsilon, v_S^\varepsilon)$,
\begin{equation*}
E_W [v_A^\varepsilon, v_B^\varepsilon, v_S^\varepsilon] :=  \int_{\Omega_B (t)} ({\rm{div}} v^\varepsilon_B) \pi_B { \ } d x + \int_{\Gamma(t)} ({\rm{div}}_\Gamma v^\varepsilon_S) \pi_0 { \ }d \mathcal{H}^2_x.
\end{equation*}
From \eqref{eq12}, we assume that for $- 1  < \varepsilon < 1$,
\begin{equation*}
\begin{cases}
{\rm{div}} v_A^\varepsilon =0 & \text{ in } \Omega_A (t),\\
v^\varepsilon_B \cdot n_\Omega =0 & \text{ on }\partial \Omega,
\end{cases}{ \ }\begin{cases}
v^\varepsilon_A \cdot n_\Gamma =  v^\varepsilon_S \cdot n_\Gamma & \text{ on } \Gamma (t),\\
v^\varepsilon_B \cdot n_\Gamma = v^\varepsilon_S \cdot n_\Gamma  & \text{ on } \Gamma (t).
\end{cases}
\end{equation*}
Then we have
\begin{equation}\label{eq4010}
\begin{cases}
{\rm{div}} \varphi_A =0 & \text{ in } \Omega_A (t),\\
\varphi_B \cdot n_\Omega =0 & \text{ on }\partial \Omega,
\end{cases}{ \ }\begin{cases}
\varphi_A \cdot n_\Gamma =  \varphi_S \cdot n_\Gamma & \text{ on } \Gamma (t),\\
\varphi_B \cdot n_\Gamma =  \varphi_S \cdot n_\Gamma  & \text{ on } \Gamma (t).
\end{cases}
\end{equation}

Now we study the forces derived from a variation of the work.
\begin{lemma}\label{lem44}
Let $0<t<T$, and $G_A , G_B, G_S \in [C (\mathbb{R}^3) ]^3$. Assume that for every $\varphi_A , \varphi_B, \varphi_S \in [C^\infty (\mathbb{R}^3) ]^3$ satisfying \eqref{eq4010},
\begin{equation*}
\frac{d}{d \varepsilon } \bigg\vert_{\varepsilon =0} E_{W} [v_A^\varepsilon, v_B^\varepsilon , v_S^\varepsilon] = \int_{\Omega_A (t)} G_A \cdot \varphi_A { \ } d x + \int_{\Omega_B (t)} G_B \cdot \varphi_B { \ } d x + \int_{\Gamma (t)} G_S \cdot \varphi_S { \ } d \mathcal{H}^2_x.
\end{equation*}
Then there is $\pi_A \in C^1 (\overline{\Omega_A (t)})$ such that
\begin{equation}\label{eq4011}
\begin{cases}
G_A = - {\rm{grad}}\pi_A & \text{ in } \Omega_{A} (t),\\
G_B = - {\rm{grad}} \pi_B & \text{ in } \Omega_{B} (t),\\
G_S = - \pi_0 H_\Gamma n_\Gamma + \pi_A n_\Gamma - \pi_B n_\Gamma & \text{ on } \Gamma (t).
\end{cases}
\end{equation}
\end{lemma}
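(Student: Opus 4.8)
The plan is to imitate the proof of Lemma~\ref{lem42} essentially verbatim, with the viscous contributions switched off. Since $E_W[v_A^\varepsilon,v_B^\varepsilon,v_S^\varepsilon]$ is affine in $\varepsilon$ through $v_B^\varepsilon=v_B+\varepsilon\varphi_B$ and $v_S^\varepsilon=v_S+\varepsilon\varphi_S$, differentiating at $\varepsilon=0$ gives
\[
\frac{d}{d\varepsilon}\Big|_{\varepsilon=0}E_W[v_A^\varepsilon,v_B^\varepsilon,v_S^\varepsilon]=\int_{\Omega_B(t)}({\rm{div}}\,\varphi_B)\pi_B\,dx+\int_{\Gamma(t)}({\rm{div}}_\Gamma\varphi_S)\pi_0\,d\mathcal{H}^2_x.
\]
Next I would integrate by parts in $\Omega_B(t)$, using $\varphi_B\cdot n_\Omega=0$ on $\partial\Omega$ and the fact that the outer unit normal of $\Omega_B(t)$ along $\Gamma(t)$ is $-n_\Gamma$, apply the surface divergence theorem \eqref{eq61} on the closed surface $\Gamma(t)$ (with $\pi_0$ constant), and use $\varphi_B\cdot n_\Gamma=\varphi_S\cdot n_\Gamma$ from \eqref{eq4010}; this rewrites the hypothesis, as in \eqref{eq44}, as
\[
\int_{\Omega_A(t)}(-G_A)\cdot\varphi_A\,dx+\int_{\Omega_B(t)}(-G_B-{\rm{grad}}\,\pi_B)\cdot\varphi_B\,dx+\int_{\Gamma(t)}(-G_S-\pi_0H_\Gamma n_\Gamma-\pi_B n_\Gamma)\cdot\varphi_S\,d\mathcal{H}^2_x=0
\]
for every $(\varphi_A,\varphi_B,\varphi_S)$ satisfying \eqref{eq4010}.

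Then I would run the three test-function cases. Choosing $\varphi_A=\varphi_S={ }^t(0,0,0)$ and $\varphi_B\in[C_c^\infty(\Omega_B(t))]^3$ yields $G_B=-{\rm{grad}}\,\pi_B$ in $\Omega_B(t)$. Choosing $\varphi_B=\varphi_S={ }^t(0,0,0)$ leaves $\int_{\Omega_A(t)}G_A\cdot\varphi_A\,dx=0$ for all $\varphi_A\in[C^\infty(\mathbb{R}^3)]^3$ with ${\rm{div}}\,\varphi_A=0$ in $\Omega_A(t)$ and $\varphi_A\cdot n_\Gamma=0$ on $\Gamma(t)=\partial\Omega_A(t)$, so the Helmholtz--Weyl decomposition (Lemma~\ref{lem62}) supplies $\pi_A\in C^1(\overline{\Omega_A(t)})$ with $G_A=-{\rm{grad}}\,\pi_A$ in $\Omega_A(t)$. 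Finally, for general $\varphi_S$, substituting $G_A=-{\rm{grad}}\,\pi_A$ and $G_B=-{\rm{grad}}\,\pi_B$ annihilates the $\Omega_B(t)$ integral and turns the $\Omega_A(t)$ integral into $\int_{\Omega_A(t)}({\rm{grad}}\,\pi_A)\cdot\varphi_A\,dx$, which by integration by parts with ${\rm{div}}\,\varphi_A=0$ and $\varphi_A\cdot n_\Gamma=\varphi_S\cdot n_\Gamma$ equals $\int_{\Gamma(t)}(\pi_A n_\Gamma)\cdot\varphi_S\,d\mathcal{H}^2_x$. Collecting the surface terms gives
\[
\int_{\Gamma(t)}\bigl(-G_S-\pi_0H_\Gamma n_\Gamma+\pi_A n_\Gamma-\pi_B n_\Gamma\bigr)\cdot\varphi_S\,d\mathcal{H}^2_x=0
\]
for all $\varphi_S\in[C^\infty(\mathbb{R}^3)]^3$, hence $G_S=-\pi_0H_\Gamma n_\Gamma+\pi_A n_\Gamma-\pi_B n_\Gamma$ on $\Gamma(t)$; together with the first two identities this is \eqref{eq4011}.

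I expect only two points to be delicate. First, to pass from the surface integral identity to the pointwise conclusion for $G_S$, one must know that an arbitrary $\varphi_S\in[C^\infty(\mathbb{R}^3)]^3$ can be completed to an admissible triple satisfying \eqref{eq4010}, i.e.\ that there exist $\varphi_A$ solenoidal in $\Omega_A(t)$ with normal trace $\varphi_S\cdot n_\Gamma$ on $\Gamma(t)$ and $\varphi_B$ vanishing on $\partial\Omega$ with the same normal trace on $\Gamma(t)$; this is a routine extension-and-correction construction, already used implicitly in Lemma~\ref{lem42}. Second, and this is the only ingredient that is not elementary bookkeeping, the identification $G_A=-{\rm{grad}}\,\pi_A$ rests on the Helmholtz--Weyl decomposition (Lemma~\ref{lem62}): orthogonality of $G_A$ to all solenoidal fields tangent to $\partial\Omega_A(t)$ is exactly what forces $G_A$ to be a gradient, and it is there that the regularity $\pi_A\in C^1(\overline{\Omega_A(t)})$ is produced. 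Everything else is the same integration by parts and surface divergence theorem as in the viscous case, with all viscous stresses set to zero.
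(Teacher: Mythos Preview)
Your argument is correct and is exactly the approach the paper intends: the paper itself proves Lemma~\ref{lem44} only by the sentence ``By the same arguments in the proof of Lemma~\ref{lem42}, we can prove Lemma~\ref{lem44},'' and your write-up carries out precisely that specialization (viscous stresses set to zero, integration by parts and the surface divergence theorem \eqref{eq61} to reach the analogue of \eqref{eq44}, then the same three test-function cases with Lemma~\ref{lem62} supplying $\pi_A$). The two delicate points you flag are the same implicit assumptions already made in the proof of Lemma~\ref{lem42}.
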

\noindent By the same arguments in the proof of Lemma \ref{lem42}, we can prove Lemma \ref{lem44}.

Now we derive our momentum equations. We admit the \emph{principle of conservation of linear momentum} (see Chapter B in \cite{Ser59}). We assume that the time rate of change of the momentum equals to the forces derived from the work done by pressures, that is, suppose that for every $0 < t <T$ and $\Lambda \subset \Omega$,
\begin{align*}
\frac{d}{d t} \int_{\Omega_A (t) \cap \Lambda} \rho_A v_A { \ }d x = \int_{\Omega_A (t) \cap \Lambda} G_A { \ } d x,\\
\frac{d}{d t} \int_{\Omega_B (t) \cap \Lambda} \rho_B v_B  { \ }d x = \int_{\Omega_B (t) \cap \Lambda} G_B { \ } d x,\\
0 = \int_{\Gamma (t) \cap \Lambda} G_S { \ } d \mathcal{H}^2_x.
\end{align*}
Here $( G_A , G_B , G_S )$ is defined by \eqref{eq4011}. Remark that we do not consider the momentum on the surface $\Gamma (t)$ since we do not consider surface flow. Applying the transport theorems with \eqref{eq49}, we have
\begin{equation}\label{eq4012}
\begin{cases}
\rho_A D_t^A v_A + \Psi_A v_A = G_A & \text{ in }\Omega_{A,T},\\
\rho_B D_t^B v_B + \Psi_B v_B = G_B & \text{ in }\Omega_{B,T} ,\\
\pi_0 H_\Gamma n_\Gamma -  \pi_A n_\Gamma +  \pi_B n_\Gamma = { }^t (0,0,0) & \text{ in }\Gamma_T.
\end{cases}
\end{equation}
Using the transport theorems \eqref{eq21}-\eqref{eq23} with \eqref{eq49}, we check that
\begin{multline*}
\frac{d }{d t} \left( \int_{\Omega_A (t)} \rho_A (x,t) { \ }d x + \int_{\Omega_B(t)} \rho_B (x,t) { \ }d x + \int_{\Gamma (t)} \rho_0 { \ }d \mathcal{H}^2_x  \right)\\
 = \int_{\Omega_A (t)} \Psi_A  { \ }d x + \int_{\Omega_B(t)} \Psi_B { \ }d x + \int_{\Gamma (t)} \rho_0({\rm{div}}_\Gamma v_S) { \ }d \mathcal{H}^2_x.
\end{multline*}
By the same argument in the proof of Theorem \ref{thm22}, we see that
\begin{multline*}
\frac{d }{d t} \left( \int_{\Omega_A (t)} \rho_A (x,t) { \ }d x + \int_{\Omega_B(t)} \rho_B (x,t) { \ }d x + \int_{\Gamma (t)} \rho_0 { \ }d \mathcal{H}^2_x  \right)\\
 = \int_{\Omega_A (t)} \left(\Psi_A - {\rm{div}} \left\{ \frac{\rho_0}{\pi_0} \pi_A v_A \right\}  \right){ \ }d x + \int_{\Omega_B(t)} \left( \Psi_B - {\rm{div}} \left\{ \frac{\rho_0}{\pi_0} \pi_B v_B \right\} \right){ \ }d x.
\end{multline*}
Thus, we set  $\Psi_A = {\rm{div}} \{ \frac{\rho_0}{\pi_0} \pi_A v_A \}$ and $ \Psi_B = {\rm{div}} \{ \frac{\rho_0}{\pi_0} \pi_B v_B \}$ to derive
\begin{equation*}
\frac{d }{d t} \left( \int_{\Omega_A (t)} \rho_A (x,t) { \ }d x + \int_{\Omega_B(t)} \rho_B (x,t) { \ }d x + \int_{\Gamma (t)} \rho_0 { \ }d \mathcal{H}^2_x  \right)=0.
\end{equation*}
Combining \eqref{eq49}, \eqref{eq4011}, and \eqref{eq4012}, therefore, we have system \eqref{eq13}.

\section{Conservation and Energy Laws}\label{sect5}

Applying the transport theorems and integration by parts, we can prove Theorem \ref{thm23}. Therefore, we only show that any solution to system \eqref{eq14} with \eqref{eq15} satisfies \eqref{eq18}.

Applying the transport theorems \eqref{eq21}, \eqref{eq22}, and system \eqref{eq14}, we see that
\begin{multline}\label{eq51}
\frac{d}{d t} \left( \int_{\Omega_A (t)} \frac{1}{2} \rho_A  \vert v_A \vert^2{ \ }d x + \int_{\Omega_B (t)} \frac{1}{2}\rho_B  \vert v_B \vert^2{ \ } d x \right)\\
=  \int_{\Omega_A (t)} \left( \frac{1}{2}  \vert v_A \vert^2( D_t^A \rho_A) + \rho_A D_t^A v_A \cdot v_A \right) { \ }d x\\
+ \int_{\Omega_B (t)} \left( \frac{1}{2}  \vert v_B \vert^2( D_t^B \rho_B + ({\rm{div}} v_B) \rho_B ) + \rho_B D_t^B v_B \cdot v_B \right) { \ }d x\\
= \int_{\Omega_A (t)}\left(  ({\rm{div}}\mathcal{T}_A) \cdot v_A + \frac{\rho_0}{2\pi_0}{\rm{div}} (\mathcal{T}_A v_A)   \vert v_A \vert^2\right) { \ }d x\\
+ \int_{\Omega_B (t)} \left( ({\rm{div}}\mathcal{T}_B) \cdot v_B + \frac{\rho_0}{2 \pi_0} {\rm{div}}(\mathcal{T}_B v_B)  \vert v_B \vert^2\right) { \ }d x.
\end{multline}
Using integration by parts with $v_B  \vert _{\partial \Omega} = { }^t (0,0,0)$, we observe that
\begin{multline*}
\text{(R.H.S.) of } \eqref{eq51}\\ = \int_{\Omega_B (t)}\left( ({\rm{div}} v_B) \pi_B + \frac{\rho_0}{2\pi_0} {\rm{div}}(\mathcal{T}_B v_B )  \vert v_B \vert^2 - \mu_B  \vert D (v_B) \vert^2- \lambda_B  \vert  {\rm{div}} v_B  \vert^2\right) { \ }d x\\
+ \int_{\Omega_A (t)} \left( \frac{\rho_0}{2 \pi_0} {\rm{div}}(\mathcal{T}_A v_A)  \vert v_A \vert^2- \mu_A \vert  D (v_A)  \vert^2\right) { \ }d x + \int_{\Gamma (t)} ({\rm{div}}_\Gamma v_S ) \pi_0{ \ }d \mathcal{H}_x^2.
\end{multline*}
Here we used the fact that
\begin{align*}
\int_{\Gamma (t)} (\mathcal{T}_A n_\Gamma \cdot v_A - \mathcal{T}_B n_\Gamma \cdot v_B) { \ }d \mathcal{H}^2_x = \int_{\Gamma (t)} ( \widetilde{\mathcal{T}}_A -  \widetilde{\mathcal{T}}_B)  v_S \cdot n_\Gamma { \ }d \mathcal{H}^2_x\\
 = - \int_{\Gamma (t)} \pi_0 H_\Gamma (v_S \cdot n_\Gamma ) { \ }d \mathcal{H}^2_x = \int_{\Gamma (t)} ({\rm{div}}_\Gamma v_S ) \pi_0 { \ }d \mathcal{H}_x^2.
\end{align*}
Integrating with respect to $t$, we have \eqref{eq18}. Therefore, Theorem \ref{thm23} is proved.

\section{Appendix: Tools}\label{sect6}
We introduce useful tools to make a mathematical model for multiphase flow.
\begin{lemma}[Surface divergence theorem]\label{lem61}
Let $\Gamma_* \subset \mathbb{R}^3$ be a smooth closed $2$-dimensional surface. Then for each $V_S \in [C^1 (\Gamma_*)]^3$,
\begin{equation}\label{eq61}
\int_{\Gamma_*} {\rm{div}}_{\Gamma_*} V_S { \ } d \mathcal{H}^2_x = - \int_{\Gamma_*} H_{\Gamma_*} ( V_S \cdot n_{\Gamma_*}) { \ } d \mathcal{H}^2_x.
\end{equation}
Here $H_{\Gamma_*}$ is the mean curvature in the direction $n_{\Gamma_*}$ defined by $H_{\Gamma_*} = - {\rm{div}}_{\Gamma_*} n_{\Gamma_*}$, where $n_{\Gamma_*} = n_{\Gamma_*} (x) = { }^t (n^*_1 , n^*_2 , n^*_3)$ denotes the unit outer normal vector at $x \in \Gamma_*$.
\end{lemma}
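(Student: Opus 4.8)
The plan is to reduce \eqref{eq61} to the boundaryless Gauss--Green identity by splitting $V_S$ into its tangential and normal parts with respect to $\Gamma_*$. Set $P_* := I_{3\times 3} - n_{\Gamma_*}\otimes n_{\Gamma_*}$ and write
\[
V_S = P_* V_S + (V_S \cdot n_{\Gamma_*})\, n_{\Gamma_*}.
\]
Since $\Gamma_*$ is smooth, $n_{\Gamma_*}$ and hence $P_*$ are $C^1$, so $P_* V_S$ is a $C^1$ vector field tangential to $\Gamma_*$ (because $P_* n_{\Gamma_*} = {}^t(0,0,0)$), and $V_S \cdot n_{\Gamma_*} \in C^1(\Gamma_*)$. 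By linearity of ${\rm{div}}_{\Gamma_*}$ it then suffices to prove the pointwise identity
\[
{\rm{div}}_{\Gamma_*}\big( (V_S\cdot n_{\Gamma_*})\, n_{\Gamma_*}\big) = - H_{\Gamma_*}\, (V_S\cdot n_{\Gamma_*}) \quad\text{on } \Gamma_*
\]
together with $\int_{\Gamma_*} {\rm{div}}_{\Gamma_*}(P_* V_S)\, d\mathcal{H}^2_x = 0$.

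For the pointwise identity I would use the Leibniz rule for the tangential derivatives $\partial_j^{\Gamma_*} f = \partial_j f - n_j^*(n_{\Gamma_*}\cdot\nabla)f$, which is immediate from the ambient product rule; the value of $\partial_j^{\Gamma_*}$ depends only on the restriction to $\Gamma_*$, so one may compute with any $C^1$ extension. Writing $g := V_S\cdot n_{\Gamma_*}$, this gives
\[
{\rm{div}}_{\Gamma_*}(g\, n_{\Gamma_*}) = (\nabla_{\Gamma_*} g)\cdot n_{\Gamma_*} + g\, {\rm{div}}_{\Gamma_*} n_{\Gamma_*},
\]
where $\nabla_{\Gamma_*} g = {}^t(\partial_1^{\Gamma_*}g,\partial_2^{\Gamma_*}g,\partial_3^{\Gamma_*}g) = P_* \nabla g$; hence $(\nabla_{\Gamma_*}g)\cdot n_{\Gamma_*} = \nabla g\cdot(P_* n_{\Gamma_*}) = 0$ by symmetry of $P_*$ and $P_* n_{\Gamma_*} = {}^t(0,0,0)$. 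Since ${\rm{div}}_{\Gamma_*} n_{\Gamma_*} = - H_{\Gamma_*}$ by the definition of the mean curvature, the pointwise identity follows.

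It then remains to show $\int_{\Gamma_*} {\rm{div}}_{\Gamma_*} W\, d\mathcal{H}^2_x = 0$ for an arbitrary tangential field $W \in [C^1(\Gamma_*)]^3$ on the closed surface $\Gamma_*$. I would fix a finite atlas of $\Gamma_*$ with a subordinate smooth partition of unity $\{\psi_k\}$, decompose $W = \sum_k \psi_k W$, express ${\rm{div}}_{\Gamma_*}(\psi_k W)$ in each local chart as the coordinate divergence of the pushed-forward field against the Riemannian volume density, and apply the ordinary divergence theorem in the chart; the boundary contributions vanish because $\partial\Gamma_* = \emptyset$ (alternatively one may simply invoke the surface divergence/transport results in \cite{DE07,KLG17}). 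Applying this fact to $W = P_* V_S$ and integrating the pointwise decomposition of ${\rm{div}}_{\Gamma_*} V_S$ over $\Gamma_*$ then yields \eqref{eq61}. I expect this last step --- carefully setting up the local coordinate expression of ${\rm{div}}_{\Gamma_*}$ and justifying the boundaryless Gauss--Green theorem --- to be the only genuinely technical point; the tangential/normal splitting and the Leibniz computation are routine.
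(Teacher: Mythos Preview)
Your argument is correct: the tangential/normal decomposition together with the Leibniz rule for $\partial_j^{\Gamma_*}$ gives the pointwise identity ${\rm{div}}_{\Gamma_*}\big((V_S\cdot n_{\Gamma_*})\,n_{\Gamma_*}\big)=-H_{\Gamma_*}(V_S\cdot n_{\Gamma_*})$, and the vanishing of $\int_{\Gamma_*}{\rm{div}}_{\Gamma_*}W\,d\mathcal{H}^2_x$ for tangential $W$ on a closed surface is exactly the boundaryless Gauss--Green theorem, provable via local charts and a partition of unity as you outline.

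The paper itself does not supply a proof of this lemma; it simply refers the reader to Simon \cite{Sim83} and Koba \cite{K19}. Your proposal therefore goes further than the paper does here. The approach you give is in fact the standard one found in those references: Simon's treatment of the first-variation formula for varifolds (which specializes to \eqref{eq61} for smooth closed surfaces) proceeds along essentially the same lines, isolating the normal component and recognizing ${\rm{div}}_{\Gamma_*}n_{\Gamma_*}$ as (minus) the mean curvature. One minor point: to make the chart computation fully rigorous you should also verify that the intrinsic Riemannian divergence of $W$ agrees with the ambient tangential divergence ${\rm{div}}_{\Gamma_*}W=\sum_j\partial_j^{\Gamma_*}W_j$ when $W$ is tangential; this is a short calculation but worth recording, since the paper's ${\rm{div}}_{\Gamma_*}$ is defined via the ambient projection rather than intrinsically.
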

\noindent The proof of Lemma \ref{lem61} can be founded in Simon \cite{Sim83} and Koba \cite{K19}.

\begin{lemma}[Helmholtz-Weyl decomposition]\label{lem62}
Let $\Omega_*$ be a bounded domain in $\mathbb{R}^3$ with smooth boundary $\partial \Omega_*$. Set
\begin{equation*}
C^\infty_{0,{\rm{div}}}( \Omega_*) = \{ \varphi \in [ C_0^\infty (\Omega_*) ]^3;  { \ }{\rm{div}} \varphi = 0 \}.
\end{equation*}
Let $F_* \in [C(\overline{\Omega_*})]^3$. Assume that for each $\varphi \in C^\infty_{0,{\rm{div}}}( \Omega_*)$
\begin{equation*}
\int_{\Omega_*} F_* \cdot \varphi { \ }d x = 0.
\end{equation*}
Then there is $\Pi_{*} \in C^1 (\overline{\Omega_*})$ such that $F_* = \nabla \Pi_*$ in $\Omega_*$.\\
\end{lemma}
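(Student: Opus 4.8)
The plan is to recognize Lemma~\ref{lem62} as the classical Helmholtz--Weyl (de Rham) characterization of gradient fields, and to prove it in two stages: first produce a scalar potential $\Pi_*$ of Sobolev regularity, then upgrade to $C^1(\overline{\Omega_*})$ using that its gradient is the continuous field $F_*$.

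For the first stage, since $F_*$ is continuous on the compact set $\overline{\Omega_*}$ it lies in $[L^2(\Omega_*)]^3$. I would recall the orthogonal Helmholtz decomposition
\[
[L^2(\Omega_*)]^3 = L^2_\sigma(\Omega_*) \oplus G(\Omega_*),
\]
where $L^2_\sigma(\Omega_*)$ is the closure of $C^\infty_{0,{\rm{div}}}(\Omega_*)$ in $[L^2(\Omega_*)]^3$ and $G(\Omega_*) = \{ \nabla p : p \in H^1(\Omega_*) \}$; this holds on a bounded domain with smooth boundary and is classical in the theory of the Stokes problem. The hypothesis says exactly that $F_*$ is $L^2$-orthogonal to $C^\infty_{0,{\rm{div}}}(\Omega_*)$, hence, by density, to all of $L^2_\sigma(\Omega_*)$; therefore $F_* \in G(\Omega_*)$, i.e. $F_* = \nabla \Pi_*$ for some $\Pi_* \in H^1(\Omega_*)$, the identity holding in the sense of distributions. (One may instead invoke de Rham's theorem directly: a vector-valued distribution annihilating every divergence-free test field is a distributional gradient.)

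For the second stage I would bootstrap the regularity. The distributional gradient of $\Pi_*$ equals the continuous function $F_*$, so a standard mollification argument shows $\Pi_* \in C^1$ in the interior of $\Omega_*$ with $\nabla \Pi_* = F_*$ pointwise there. Because $\partial \Omega_*$ is smooth and $F_* \in [C(\overline{\Omega_*})]^3$, integrating $F_*$ along paths that stay in $\Omega_*$ and passing to the limit at the boundary shows that $\Pi_*$ and $\nabla \Pi_*$ extend continuously to $\overline{\Omega_*}$, so $\Pi_* \in C^1(\overline{\Omega_*})$; adjusting $\Pi_*$ by an additive constant does not affect $F_* = \nabla \Pi_*$.

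The main obstacle is essentially bookkeeping rather than depth: one must quote the Helmholtz decomposition and the density of $C^\infty_{0,{\rm{div}}}(\Omega_*)$ in $L^2_\sigma(\Omega_*)$ in a form valid for the smooth bounded domain at hand, and one must carefully justify the passage from ``$\nabla \Pi_*$ is a continuous function'' to ``$\Pi_* \in C^1(\overline{\Omega_*})$'' using the boundary regularity. If one prefers a self-contained argument avoiding the abstract decomposition, an equivalent route is to solve the variational problem of finding $\Pi_* \in H^1(\Omega_*)$ with $\int_{\Omega_*} \nabla \Pi_* \cdot \nabla \psi \, dx = \int_{\Omega_*} F_* \cdot \nabla \psi \, dx$ for all $\psi \in H^1(\Omega_*)$ by Lax--Milgram on $H^1(\Omega_*)/\mathbb{R}$, set $w := F_* - \nabla \Pi_*$, note that $w$ is weakly divergence-free with vanishing normal component and that the hypothesis gives $\int_{\Omega_*} w \cdot \varphi \, dx = 0$ for all $\varphi \in C^\infty_{0,{\rm{div}}}(\Omega_*)$, and conclude $w = 0$ by density; the delicate point there is again the density statement.
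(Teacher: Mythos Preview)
Your outline is correct and is essentially the standard argument one finds in the references the paper cites. Note, however, that the paper does not give its own proof of this lemma at all: it simply states ``The proof of Lemma~\ref{lem62} can be founded in Temam~\cite{Tem77} and Sohr~\cite{Soh01}'' and moves on. So there is nothing to compare against beyond observing that your two-stage argument (Helmholtz decomposition in $L^2$ to obtain $\Pi_*\in H^1$, followed by a regularity bootstrap using the continuity of $F_*=\nabla\Pi_*$) is precisely the content one extracts from those textbooks. Your treatment of the $C^1(\overline{\Omega_*})$ upgrade is adequate for this setting: once $\nabla\Pi_*=F_*$ pointwise in $\Omega_*$ with $F_*$ uniformly continuous on the compact closure, $\Pi_*$ is Lipschitz and hence extends continuously to $\overline{\Omega_*}$, and since its gradient $F_*$ is already continuous on $\overline{\Omega_*}$, the definition of $C^1(\overline{\Omega_*})$ is met.
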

\noindent The proof of Lemma \ref{lem62} can be founded in Temam \cite{Tem77} and Sohr \cite{Soh01}.

\section*{Acknowledgments}

This work was partly supported by the Japan Society for the Promotion of Science (JSPS) KAKENHI Grant Number JP21K03326.

\bibliographystyle{siamplain}
\bibliography{references}

\end{document}